\documentclass[twocolumn, pra , footinbib ]{revtex4-2}

\usepackage{graphicx}
\usepackage{amsfonts}
\usepackage{amssymb}
\usepackage{amsthm}
\usepackage{array}
\usepackage{amsmath,mathrsfs}
\usepackage{verbatim} 
\usepackage[colorlinks=true,citecolor=blue,urlcolor=blue]{hyperref}
\usepackage{color}
\usepackage{bbold}
\usepackage{epstopdf}
\usepackage{mathtools}
\usepackage{enumerate}
\usepackage{subcaption}
\usepackage{physics}
\usepackage{subcaption}
\usepackage{soul,xcolor}
\usepackage{newtxtext,newtxmath}
\usepackage{hyphenat}
\usepackage{physics}
\usepackage{braket}

\newtheorem*{proposition*}{Proposition}

\newtheorem{theorem}{Theorem}
\newtheorem*{theorem*}{Theorem}
\newtheorem{corollary}{Corollary}[theorem]
\newtheorem*{corollary*}{Corollary}

\theoremstyle{definition}

\captionsetup{justification   = raggedright,
              singlelinecheck = false}


\begin{document}
\title{Kirkwood-Dirac Type Quasiprobabilities as Universal Identifiers of Nonclassical Quantum Resources}
\author{Kok Chuan Tan}
\email{bbtankc@gmail.com}
\affiliation{Institute of Fundamental and Frontier Sciences, University of Electronic Science and Technology of China, Chengdu 611731, China}
\affiliation{Key Laboratory of Quantum Physics and Photonic Quantum Information, Ministry of Education, University of Electronic Science and Technology of China, Chengdu 611731, China}
\author{Souradeep Sasmal}
\affiliation{Institute of Fundamental and Frontier Sciences, University of Electronic Science and Technology of China, Chengdu 611731, China}

\affiliation{Key Laboratory of Quantum Physics and Photonic Quantum Information, Ministry of Education, University of Electronic Science and Technology of China, Chengdu 611731, China}


\begin{abstract}

We show that a Kirkwood-Dirac type quasiprobability distribution is sufficient to reveal any arbitrary quantum resource. This is achieved by demonstrating that it is always possible to identify a set of incompatible measurements that distinguishes between   resourceful states and nonresourceful states. The quasiprobability reveals a resourceful quantum state by having at least one quasiprobabilty outcome with a strictly negative numerical value. We also show that there always exists a quasiprobabilty distribution where the total negativity can be interpreted as the geometric distance between a resourceful quantum state to the closest nonresourceful state. It can also be shown that Kirkwood-Dirac type quasiprobability distributions, like the Wigner distribution, can be made informationally complete, in the sense that it can provide complete information about the quantum state while simultaneously revealing nonclassicality whenever a quasiprobability outcome is negative.  Moreover, we demonstrate the existence of sufficiently strong anomalous weak values whenever the quasiprobability distribution is negative, which suggests a means to experimentally test such quasiprobability distributions.  Since incompatible measurements are necessary in order for the quasiprobability to be negative, this result suggests that measurement incompatibility may underlie any quantum advantage gained from utilizing a nonclassical quantum resource. 

\end{abstract}

\maketitle

\section{Introduction}

It is known that quantum theory is deeply incompatible with the classical view of nature \cite{Bell1964,Spekkens2005}. This has attracted intense debates that has not only greatly deepened our understanding of nature \cite{Hardy2001, Home2008, Harrigan2010, Pusey2012, Popescu2014, Bong2020, Patra2023}, and has also led to the development of an immense range of applications that utilizes nonclassical quantum effects in information processing tasks \cite{Ekart1991, Acin2007, Colbeck2012, Brunner2014, Supic2020}. However, not all quantum states can provide superior performance in applications. Most applications rely on the exploiting the nonclassical nature of certain quantum states, which are also called resource states. An example would be entanglement, where entangled quantum states may be used for tasks such as entanglement distillation, quantum teleportation, or harnessing nonlocality \cite{Horodecki2009, Brunner2014}. The challenge is then to identify whether a given quantum state is resourceful or not \cite{Chitambar2019}. For example, quantities such as the entanglement robustness \cite{Vidal1999}, Fischer information \cite{Tan2021}, and quasiprobabilites such as the well known Wigner function \cite{Wigner1932} have been studied for this purpose. In particular, it is known that Wigner function, while providing a signature for nonclassical quantum effects\cite{Tan2020}, is not sufficient to identify quantumness in general\cite{Spekkens2008}.

This brings into question whether the general approach of quasiprobability distributions is robust enough to identify more types of quantum resources. To answer this question, we will consider what we call a Kirkwook-Dirac type quasiprobability distribution.  It will be shown that a Kirkwook-Dirac type quasiprobability is always sufficient to identify any arbitrary quantum resource. More precisely, we establish that for any resourceful quantum state, there exists a quasiprobability distribution of the Kirkwook-Dirac type where at least one of the quasiprobability outcomes yield a strictly negative value. Furthermore, the same distribution always outputs a classical positive probability distribution given any nonresourceful quantum state. We also show that by suitably defining a set of incompatible measurements related to a quantum resource witness, the total negativity of this quasiprobability can be given a geometric interpretation as the Frebenius distance to the closest nonresourceful state. We will also discuss a method of constructing the quasiprobability distribution.   

Note that while negative probabilities are unphysical, this does not prevent quantum quasiprobability distributions from being tested in the laboratory. Recent experiments have demonstrated this through operational quasiprobability distributions involving sequential measurements at two different times \cite{Ryu2019}. To stress the experimental feasibilty of the quasiprobability distribution, we will also establish a connection between the quasiprobability distribution and and anomalous weak values, which has  several applications in information tasks\cite{Dressel2014} and can be mesaured experimentally. The nonclassicality of anomalous weak values has previously been explored  within the contexts of macroscopic realism \cite{Goggin2011, Dressel2011, Pan2020} and non-contextual ontological models \cite{Pussey2014}. Here, we will demonstrate that for every quantum resource with an associated negative quasiprobability outcome, there always exists a corresponding collection of pairs of projectors that lead to anomalous weak values.


\section{Preliminaries} 

\subsection{Negativity and Quasiprobability}
The Kirkwook-Dirac quasiprobability distribution\cite{Kirkwood1933, Dirac1945} is a quasiprobability distribution of the form
\begin{equation}
    p(i,j) = \Tr(\Pi^a_i \Pi^b _j \rho),
\end{equation} where $\Pi^a_i = \ketbra{a_i}$ and $\Pi^b_j = \ketbra{b_j}$. It can be directly verified that as long as $\sum_i \ketbra{a_i} = \sum_j \ketbra{b_j}= \openone$, then $p(i,j)$ sums up to 1. However, it is not a proper probability distribution since $p(i,j)$ can take on negative, or even complex values. If one  only considers the real portions of $p(i,j)$ , then the result is called the Margenau-Hill quasiprobability distribution\cite{Margenau1961}.

For our purposes, we will consider a slightly extended quasiprobability distribution of the Kirkwood-Dirac type, where we allow an arbitrary number of projectors instead of the traditional two. We define a quasiprobability distribution with an ordered set of sets of the type $\{S_0 , \ldots, S_N \}$, where for each $i=0,\ldots,N$, $S_i = \{ \Pi_{i,0}, \ldots , \Pi_{i,d_i} \}$ is a collection of projectors $\Pi_{i,j}$ that satisfy the completeness relation $\sum_{j=0}^{d_i}\Pi_{i,j} =\openone$. The specific ordering of $S_i$ within the set is crucial because projection operators do not commute in general. For a given input state $\rho$, $\{S_0 , \ldots, S_N \}$ completely specifies the quasiprobability distribution as follows 
\begin{equation}
P(x_0, \ldots, x_N \mid \rho) \coloneqq \Tr[\Pi_{0,x_0}\ldots\Pi_{N,x_N} \ \rho ]
\end{equation}
where $x_i = 0 , \ldots, d_i$, $i = 0,\ldots, N$. It is straightforward to verify that this quasiprobability always sums to 1 using the completeness relation. Furthermore, an instance of $(x_0,\ldots,x_N)$ is called a quasiprobability event, and a union of several events forms another event. The quasiprobability of a union of events is given by the the sum of their respective quasiprobabilities, similar to regular probabilities. Therefore, any distribution function resulting from such unions of events is considered a valid quasiprobability distribution.

For the majority of our results, we will consider the special case where $S_i = \{ \Pi_i, \openone - \Pi_i \}$. In this case, each $S_i$ is completely specified by a single projector $\Pi_i$, so we will use the simplified notation $\{ \Pi_0, \ldots, \Pi_N \}$ to represent $\{S_0 , \ldots, S_N \}$ without any ambiguity. The resulting quasiprobability distribution is then given by  
\begin{equation}
    P(x_0,\ldots, x_N \mid \rho ) \coloneqq \Tr[\pi_{x_0}\ldots \pi_{x_N}\rho ]
\end{equation}
where $x_i = 0,1$, $\pi_{x_i=0} \coloneqq \Pi_i$ and $x_i = 0,1$ and $\pi_{x_i=1}\coloneqq \openone - \Pi_i$. Furthermore, for the rest of the discussion, we will always ensure that $P(x_0,\ldots, x_N \mid \rho ) $ are real values.


\subsection{Special case: Wigner distribution} 

The Wigner function \cite{Wigner1932} represents a quantum state via a function on phase space is given by
\begin{equation} \label{wf}
    W(x,p) \coloneqq \int dy \frac{1}{\pi}\mel{x-y}{\rho}{x+y}e^{2  i py }
\end{equation}
assuming $\hbar = 1$. 

Now, by considering $\Pi(x,y,p) \coloneqq \ketbra{\psi(x,y,p)}$, $\ket{\psi(x,y,p)} \coloneqq \sqrt{p}\ket{x+y} +\sqrt{1-p}e^{-2ipy} \ket{x-y}$ and $ p \coloneqq \frac{1}{2} + \sqrt{\frac{1}{4} - \frac{1}{\pi^2}}$, the Wigner function given by Eq.~(\ref{wf}) can be re-expressed as follows
\begin{equation}
    W(x,p)  = \int dy \Tr[\ketbra{x+y}\Pi(x,y,p)\ketbra{x-y}\rho]
\end{equation}

The Wigner distribution $W(x,p)$ is therefore expressed as a sum over $y$ of the quasiprobability $P(x,y,p) \coloneqq \Tr[\ketbra{x+y}\Pi(x,y,p)\ketbra{x-y} \rho ]$ formed by ordered set of projectors $ \{\ketbra{x+y}, \Pi(x,y,p) ,  \ketbra{x-y} \}$. It is worth noting that since the Wigner distribution is informationally complete, meaning one can fully specify the density matrix $\rho$ with information obtained from $W(x,p)$, the quasiprobability distribution $P(x,y,p)$ specified here must also be informationally complete. This is because $W(x,p)$ is retrieved by postprocessing $P(x,y,p)$. The informational completeness of quasiprobabilities is a property that will be further discussed in Theorem~\ref{thm::infoComplete}.


\section{Quasiprobabilities can identify arbitrary quantum resources}

Any nonclassical quantum resource that falls under the resource theory framework is composed of a convex, closed set of quantum states that is considered classical. Any quantum state that lies outside of this closed convex set is called a nonclassical state, or a resourceful state since it is considered a quantum resource. A prominent example of such a quantum resource is entanglement. In entanglement resource theory, the set of classical states is defined as the convex hull of pure product states, while entangled states are defined as quantum states that are not separable.

Recognising the above fact enables us to prove the following general result concerning the relationship between quantum resources and quasiprobabilities. 

\begin{theorem}[Negative quasiprobabilities] \label{thm::pseudoprob}
    Consider any quantum system with corresponding set of classical states $\mathcal{C}$. Then for any quantum state $\rho$ there exists a real valued quasiprobability distribution $P_i(\rho)$ such that $\sum_i P_i(\rho) = 1$, $P_i <0$  for some $i$ if $\rho \notin \mathcal{C}$, and   $P_i \geq 0$  for all $i$ if $\rho \in \mathcal{C}$.
\end{theorem}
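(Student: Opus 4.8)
The plan is to exploit the fact that $\mathcal{C}$ is closed and convex, so that any resourceful state $\rho \notin \mathcal{C}$ can be strictly separated from $\mathcal{C}$ by a hyperplane in the real vector space of Hermitian operators equipped with the Hilbert--Schmidt inner product. Concretely, I would invoke the separating hyperplane theorem to obtain a Hermitian witness operator $W$ and a real number $c$ such that $\Tr(W\sigma) \le c$ for all $\sigma \in \mathcal{C}$ while $\Tr(W\rho) > c$. By rescaling and shifting $W$ (replacing $W$ by $\alpha W + \beta \openone$ with $\alpha > 0$), I can arrange a normalized witness $\tilde{W}$ with $0 \le \tilde{W} \le \openone$, so that $\tilde{W}$ is a valid effect, and with the threshold relocated so that $\Tr(\tilde{W}\sigma) \le t$ for all classical $\sigma$ but $\Tr(\tilde{W}\rho) > t$ for some chosen $t \in [0,1]$; ideally one pushes the construction so the relevant classical bound sits at a convenient value.

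Next I would realize $\tilde{W}$ (or the associated effect) in terms of projectors so that it fits the Kirkwood--Dirac template of the Preliminaries. The natural move is to dilate: write the effect $\tilde{W}$ as $\tilde{W} = \Tr_{E}[(\openone \otimes \ketbra{e_0})\, U^\dagger (\Pi \otimes \openone) U \,(\openone \otimes \ketbra{e_0})]$ for some ancilla, or more simply, since we are free to use an ordered product of two projectors on the system alone, to find projectors $\Pi_0,\Pi_1$ such that $\mathrm{Re}\,\Tr(\Pi_0 \Pi_1 \rho)$ reproduces (an affine function of) $\Tr(\tilde W \rho)$. Because the Margenau--Hill/real Kirkwood--Dirac quasiprobability of a two-projector sequence is $\tfrac12\Tr\big((\Pi_0\Pi_1 + \Pi_1\Pi_0)\rho\big)$, and operators of the form $\tfrac12(\Pi_0\Pi_1+\Pi_1\Pi_0)$ together with the identity span enough of Hermitian space, one can engineer the outcome $P_{00}(\rho) = \Tr[\pi_0\pi_0\rho]$ — or a union of quasiprobability events — to equal $t - \Tr(\tilde W\rho)$ up to a positive constant. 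Then $P_{00}(\rho) < 0$ precisely when $\Tr(\tilde W\rho) > t$, i.e. exactly for resourceful $\rho$, while $P_{00}(\sigma) \ge 0$ for all $\sigma \in \mathcal C$; the remaining outcomes are forced to be nonnegative by a suitable choice of the complementary projectors (e.g. taking the other $S_i$ trivial or aligning them so the leftover quasiprobability mass is genuinely a probability), and they sum to $1$ automatically by completeness.

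I expect the main obstacle to be the second step: turning an abstract witness effect $\tilde W$ into an explicit ordered product of projectors whose \emph{real} Kirkwood--Dirac quasiprobability has a single distinguished negative outcome while \emph{all} other outcomes remain nonnegative for every state in $\mathcal C$. Getting one negative entry is easy; ensuring the other entries of the distribution do not themselves dip negative on classical states — so that the "if $\rho\in\mathcal C$ then $P_i \ge 0$ for all $i$" half of the claim holds — is the delicate part and is presumably where the construction is carefully tuned (choosing the projectors' overlap, possibly enlarging the Hilbert space with an ancilla initialized in a fixed pure state, or grouping events into unions). A secondary technical point is verifying reality of the quasiprobability, which the symmetrization into the Margenau--Hill form handles, and checking that the normalization $\sum_i P_i = 1$ survives the event-grouping; both of these are routine once the projector realization is fixed. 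I would present the hyperplane separation and witness normalization first, then the projector dilation, and finally the bookkeeping that the full distribution is nonnegative on $\mathcal C$.
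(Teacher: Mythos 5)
Your first step (hyperplane separation to obtain a Hermitian witness, absorbing the threshold into the witness via $t\openone - \tilde W$) matches the paper. The gap is in the second step. You propose to realize the witness as the \emph{real part of a product of two projectors}, $\tfrac12\Tr[(\Pi_0\Pi_1+\Pi_1\Pi_0)\rho]$, and justify this by saying that such operators ``together with the identity span enough of Hermitian space.'' Spanning is not the relevant property: a single quasiprobability outcome gives you one operator of the form $\tfrac12(\Pi_0\Pi_1+\Pi_1\Pi_0)$, not a linear combination of them, and the spectrum of a symmetrized product of two projectors is heavily constrained (for rank-one projectors it is a rank-two operator whose eigenvalues are fixed by the single overlap $\abs{\braket{a|b}}$). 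A generic witness cannot be matched, even up to an affine shift, by one such operator, and you give no argument that it can. This is exactly the step the paper handles with a different and essential ingredient: it tensors the witness with $\ketbra{0}$ on a qubit ancilla so that $W'=W\otimes\ketbra{0}$ is \emph{singular}, and then invokes the theorem that every bounded singular operator can be written as a product of \emph{finitely many} projections, $W'\propto \Pi_0\cdots\Pi_N$, with $N$ not restricted to two. Your passing mention of an ancilla dilation points in the right direction, but you never connect it to the reason the ancilla is needed (forcing a nontrivial kernel) nor to the product-of-projections theorem that makes the construction go through.

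Your worry about the remaining outcomes dipping negative on classical states is legitimate but is resolved far more cheaply than you anticipate: one coarse-grains \emph{all} events other than $(0,\ldots,0)$ into a single event with quasiprobability $P_1 = 1-P_0$, and since a product of projections has operator norm at most one, $P_0=\Tr[\Pi_0\cdots\Pi_N\,\sigma\otimes\ketbra{0}]\le 1$, hence $P_1\ge 0$ on $\mathcal{C}$. No tuning of complementary projectors is needed. As written, your proposal does not close the central step and would not yield a proof.
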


\begin{proof}
   Let the Hilbert space be denoted by $\mathcal{H}$. Consider a set of projectors $\{ \Pi_0, \ldots , \Pi_N \}$ acting on the extended Hilbert space $\mathcal{H} \otimes \mathcal{H}_2$ where $\mathcal{H}_2$ is a two-dimensional Hilbert space. Define the function $P(x_0,\ldots, x_N \mid \rho ) \coloneqq \Tr[\pi_{x_0}\ldots \pi_{x_N}\rho \otimes \ketbra{0}]$ where $x_i \in \{0,1\}$, $\pi_{x_i=0} \coloneqq \Pi_i$ and $\pi_{x_i=1}\coloneqq \openone - \Pi_i$. One may directly verify by summing over all strings $(x_0,\ldots, x_N)$ that 
   \begin{equation}
       \sum_{x_0,\ldots, x_N} P(x_0,\ldots, x_N \mid \rho ) = 1
   \end{equation}

   Therefore, $P(x_0,\ldots, x_N \mid \rho )$ forms a quasiprobability distribution.

Now, we show that such a collection of projectors $\{ \Pi_0, \ldots , \Pi_N \}$ always exists. By the hyperplane separation theorem, for any closed convex set $\mathcal{C}$ and any $\rho \notin \mathcal{C}$, there always exists a Hermitian operator $W$ acting on $\mathcal{H}$ such that $\Tr[W \rho] < 0$ while $\Tr[W \sigma] > 0$ for any $\sigma \in \mathcal{C}$. Consider the Hermitian matrix $W' \coloneqq W\otimes\ketbra{0}{0} $ acting on $\mathcal{H}\otimes \mathcal{H}_2$. We observe that $W'$ is a singular matrix since $\Tr[W'\ketbra{\psi}\otimes \ketbra{1}]=0$. It is known that every singular matrix whose operator norm is bounded can be written as a product of projection operators \cite{Oikhberg1999}. Therefore, there must exist some set of projectors $\{ \Pi_0, \ldots , \Pi_N \}$ such that $W' \propto  \Pi_0 \ldots  \Pi_N $, $\Tr[\Pi_0 \ldots  \Pi_N  \rho\otimes \ketbra{0}] < 0$ if $\rho \notin \mathcal{C}$, and  $\Tr[\Pi_0 \ldots  \Pi_N  \sigma \otimes \ketbra{0}] > 0$ for any $\sigma \in \mathcal{C}$.

   Finally, let $P_0$ be the quasiprobability of the event where $(x_0,\ldots, x_N) = (0,\ldots,0)$ while $P_1 $ is the quasiprobability of the event where the strings are not all zeroes. Since $P_0 + P_1 = \sum_{x_0,\ldots, x_N} P(x_0,\ldots, x_N \mid \rho ) = 1$ and $P_0 = \Tr[\Pi_0 \ldots  \Pi_N  \rho\otimes \ketbra{0}] < 0  $, this implies $P_1 = 1- P_0 > 1 > 0$. We have therefore found at least one quasiprobability distribution $P_i = P_i(\rho)$ that satisfies $P_i <0$  for some $i$ given that $\rho \notin \mathcal{C}$, and   $P_i \geq 0$  for all $i$ if $\rho \in \mathcal{C}$.

\end{proof}

In general, the actual quasiprobability distribution described above does not have any particular physical interpretation, beyond the fact that a negative value indicates a resourceful quantum state. However, it is possible to show that one may construct it in such as way as to imbue it with a geometric interpretation, as the following Corollary shows:

\begin{corollary}[Geometric interpretation] \label{cor::Geom}
    It is always possible to construct a quasiprobability distribution where the total negativity is directly proportional to $\min_{\sigma \in\mathcal{C}} \norm{\rho - \sigma}_F$, the Frobenius distance to the closest classical state. 
\end{corollary}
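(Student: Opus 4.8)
The plan is to pick the witness $W$ from the hyperplane separation theorem more carefully, so that its negative part encodes the Frobenius geometry. Recall that the Frobenius distance is realised by a linear functional: if $\sigma^\star$ is the closest classical state to $\rho$, then by convexity of $\mathcal{C}$ the operator $W \coloneqq \sigma^\star - \rho$ satisfies $\Tr[W\rho] = -\norm{\rho - \sigma^\star}_F^2 < 0$ and $\Tr[W\sigma] \geq 0$ for all $\sigma \in \mathcal{C}$ (the supporting hyperplane at $\sigma^\star$). After shifting by a small multiple of the identity on the orthogonal complement of its support one can arrange strict positivity on $\mathcal{C}$ without changing the magnitude of the negative eigenvalue contribution; alternatively one works directly with $W = \sigma^\star - \rho$ and tracks $\Tr[W\rho]$. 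The key point is that $\Tr[W\rho] = -D^2$ where $D \coloneqq \min_{\sigma\in\mathcal{C}}\norm{\rho-\sigma}_F$.

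Next I would run the construction of Theorem~\ref{thm::pseudoprob} with this specific $W$: form $W' \coloneqq W \otimes \ketbra{0}{0}$ on $\mathcal{H}\otimes\mathcal{H}_2$, which is singular and bounded, and invoke the factorization result of \cite{Oikhberg1999} to write $W' = c\,\Pi_0\cdots\Pi_N$ for projectors $\Pi_i$ and a constant $c$. The quasiprobability event $(x_0,\ldots,x_N)=(0,\ldots,0)$ then has value $P_0 = c^{-1}\Tr[W'\,\rho\otimes\ketbra{0}] = c^{-1}\Tr[W\rho] = -c^{-1}D^2$, and as in the theorem all other events bundle into $P_1 = 1 - P_0 \geq 0$. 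Hence the total negativity of the distribution — the sum of absolute values of the negative outcomes, here just $|P_0|$ — equals $c^{-1}D^2$. To get proportionality to $D$ itself rather than $D^2$, I would instead choose the witness to be the \emph{normalised} direction $\hat W \coloneqq (\sigma^\star - \rho)/\norm{\sigma^\star-\rho}_F$, for which $\Tr[\hat W\rho] = -D$; feeding $\hat W$ into the same machinery yields $|P_0| = c^{-1}D$, i.e.\ total negativity directly proportional to the Frobenius distance with a state-independent constant absorbed into the projector decomposition.

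The main obstacle is controlling the proportionality constant $c$ coming from the factorization of \cite{Oikhberg1999}: a priori $c$ could depend on $\rho$ through the operator norm of $W'$, which would break the claimed proportionality. I would address this by noting that $\hat W'$ has unit-order operator norm (bounded by $1$ since $\norm{\hat W}_F \le \sqrt 2$ and in fact $\norm{\hat W}_\infty \le 1$ after the normalisation), so the factorization can be carried out with $c$ chosen from a fixed bounded range, or by absorbing any residual scalar into a rescaling of one projector's ambient construction; since we only claim proportionality ("directly proportional to"), it suffices that the ratio $|P_0|/D$ is a fixed constant once the construction is fixed. A secondary technical point is ensuring the supporting hyperplane at $\sigma^\star$ can be perturbed to a strictly separating one while keeping the negative value on $\rho$ equal to $-D$ up to the controlled constant; this is handled exactly as in Theorem~\ref{thm::pseudoprob}, since strictness on the compact set $\mathcal{C}$ costs only an arbitrarily small shift.
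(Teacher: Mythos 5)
Your overall route is the paper's: take the Frobenius projection $\sigma^\star$ of $\rho$ onto $\mathcal{C}$, build a witness from it, normalise so that its expectation on $\rho$ equals $-D$ with $D=\min_{\sigma\in\mathcal{C}}\norm{\rho-\sigma}_F$ (rather than $-D^2$), and feed it into the factorization of Theorem~\ref{thm::pseudoprob} so that the all-zeros event carries the negativity. However, the first step as you state it is incorrect: for $W=\sigma^\star-\rho$, neither $\Tr[W\rho]=-\norm{\rho-\sigma^\star}_F^2$ nor $\Tr[W\sigma]\geq 0$ on $\mathcal{C}$ holds in general. One has $\Tr[(\sigma^\star-\rho)\rho]=-D^2+\Tr[(\sigma^\star-\rho)\sigma^\star]$, and the projection (obtuse-angle) criterion only gives $\Tr[(\sigma^\star-\rho)(\sigma-\sigma^\star)]\geq 0$, i.e.\ the supporting functional is affine, with offset $\Tr[(\sigma^\star-\rho)\sigma^\star]$. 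That offset need not vanish and can have either sign (e.g.\ $\sigma^\star$ highly mixed, $\rho$ pure with large overlap makes it negative), so your $W$ may fail to be nonnegative on $\mathcal{C}$ and does not evaluate to $-D^2$ on $\rho$. The repair is precisely the term the paper includes and you dropped: since $\Tr[\openone\,\tau]=1$ for any state $\tau$, take $W=\big(\sigma^\star-\rho-\Tr[\sigma^\star(\sigma^\star-\rho)]\,\openone\big)/\norm{\rho-\sigma^\star}_F$, which yields $\Tr[W\rho]=-D$ and $\Tr[W\sigma]\geq 0$ for all $\sigma\in\mathcal{C}$. Your remark about ``shifting by a small multiple of the identity for strictness'' does not cover this, because the needed shift is a fixed, generally non-small quantity determined by $\rho$ and $\sigma^\star$, not an arbitrarily small perturbation.

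Once the witness is corrected, the remainder of your argument coincides with the paper's: $W'=W\otimes\ketbra{0}$ is singular, factorizes as $k\,\Pi_0\cdots\Pi_N$, the event $(0,\ldots,0)$ is the unique negative outcome, and its magnitude is $D/k$. Your worry about the constant $k$ depending on $\rho$ through the factorization is legitimate, and the paper does not resolve it either --- it simply asserts $W'=k\,\Pi_0\cdots\Pi_N$ with $k\geq 0$ and reads off proportionality for the resulting fixed construction (indeed the paper's own line $\abs{P_0}=k\min_\sigma\norm{\rho-\sigma}_F$ has $k$ on the wrong side of its preceding equation, which gives $\abs{P_0}=D/k$). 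So on that point you are no worse off than the source; the substantive defect to fix is the missing identity-offset in the witness.
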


\begin{proof}

For any given $\rho$, let $\sigma_0$ be the closest state in $\mathcal{C}$ with respect to the Frobenius norm, i.e. $\norm{\rho - \sigma_0}_F = \min_{\sigma \in\mathcal{C}} \norm{\rho - \sigma}_F$. Choose 
\begin{equation}
    W = \frac{\sigma_0 - \rho - \Tr[\sigma_0(\sigma_0-\rho)] \openone}{\norm{\rho - \sigma_0}_F}
\end{equation}

Now, considering $W' = W\otimes \ketbra{0}$, it is straightforward to verify by direct evaluation that
\begin{equation}
    \Tr[W'\rho\otimes\ketbra{0}] = - \norm{\rho-\sigma_0}_F = - \min_{\sigma \in\mathcal{C}} \norm{\rho - \sigma}_F
\end{equation}

From the proof of Theorem~\ref{thm::pseudoprob}, it was shown that for any resource witness $W$, satisfying $\Tr[W \rho] < 0$ while $\Tr[W \sigma] > 0$ for any $\sigma \in \mathcal{C}$, then the operator $W' = W\otimes \ketbra{0}$ defined on the extended Hilbert space $\mathcal{H}\otimes \mathcal{H}_2$ is proportional to the product of projectors, given by $W' =  k \Pi_0 \ldots \Pi_N$ where $k \geq 0 $. With such product decomposition $W' =  k \Pi_0 \ldots \Pi_N$, we can always infer the distance to the closest classical state from its negative quasiprobability since
\begin{eqnarray}
\Tr[W'\rho\otimes\ketbra{0}] &=& k \ \Tr[\Pi_0 \ldots \Pi_N \rho\otimes\ketbra{0}] \nonumber \\
&=&  k \ P(x_0 = 0,\ldots, x_N =0   \mid \rho) \nonumber \\
&=& - \min_{\sigma \in\mathcal{C}} \norm{\rho - \sigma}_F
\end{eqnarray}

Furthermore, if we employ the same construction as in the proof of Theorem~\ref{thm::pseudoprob}, then $P_0 = P(x_0 = 0,\ldots, x_N =0   \mid \rho)$ represents the quasiprobability of the event where $(x_0,\ldots, x_N) = (0,\ldots,0)$, while $P_1 $ is the quasiprobability of the event where the strings are not all zeroes. Then $\abs{P_0} = k \min_{\sigma \in\mathcal{C}} \norm{\rho - \sigma}_F $ is the total negativity of the quasiprobability distribution. Therefore, we have provided a construction that satisfies the Corollary.
\end{proof}


\section{Construction of quasiprobability distributions}

The key insight from the preceding discussion is that the crucial property for constructing quasiprobability distributions lies in the ability to map any quantum resource witness $W$ acting on a Hilbert space $\mathcal{H}$ to another witness operator $W'$ acting on the extended Hilbert space $\mathcal{H}\otimes \mathcal{H}_2$. This mapping and extension of the underlying Hilbert space always allows a decomposition as a product of projectors $W' = \Pi_0 \ldots \Pi_N$. 

Now, let's delve into the detailed construction of the set of projectors $\{ \Pi_0, \ldots, \Pi_N\}$ forming the quasiprobability distribution. Consider the simplest possible witness operator $W$ on $\mathcal{H}$ with one positive and one negative eigenvalue, such that $W\ket{k_+} = a\ket{k_+}$, $W\ket{k_-} = -b\ket{k_-}$, and $W =  a \ketbra{k_+} - b \ketbra{k_-}$, where $ 0 < a,b \leq \frac{1}{4}$. By extending the Hilbert space, we map $W \rightarrow W' = W\otimes \ketbra{0}$ acting on $\mathcal{H}\otimes \mathcal{H}_2$. It is evident that $W' \ket{k_\pm}\otimes \ket{1} = 0$, so we are guaranteed that $W'$ has a nontrivial kernel. 

Since $\qty{\ket{k_+} \ket{0}, \ket{k_+} \ket{1}} $ and $\qty{\ket{k_-} \ket{0}, \ket{k_-} \ket{1}} $ each occupy mutually orthogonal subspaces, we can consider the positive and negative portions of the witness operator separately. Let $W'_+ := a \ketbra{k_+} \otimes \ketbra{0}$. It is then possible to show that 
\begin{equation} \label{eqn::wPlus}
    W'_+ = \big( \ketbra{k_+} \otimes \ketbra{0} \big) \ \ketbra{\phi} \ \big( \ketbra{k_+} \otimes \ketbra{0}\big) 
\end{equation} 
where $\ket{\phi} = \sqrt{a}\ket{k_+}\ket{0} + \sqrt{1-a}\ket{k_+}\ket{1}$, which can be verified by a straightforward evaluation.

Similarly, considering $W'_- := -b \ketbra{k_-} \otimes \ketbra{0}$, one can show that 
\begin{eqnarray} \label{eqn::wMinus}
    W'_- &=& \big(\ketbra{k_-} \otimes \ketbra{0}\big) \ \ketbra{\psi_3} \ketbra{\psi_2} \ketbra{\psi_1} \nonumber \\ 
    && \times   \ \big(\ketbra{k_-} \otimes \ketbra{0}\big)
\end{eqnarray}  
where $\ket{\psi_1} = \sqrt{1-\lambda}\ket{k_-}\ket{1}- \sqrt{\lambda}\ket{k_-}\ket{0}, $ $\ket{\psi_2} = \ket{k_-}\ket{1}, $ $\ket{\psi_3} = \sqrt{1-\lambda}\ket{k_-}\ket{1}+ \sqrt{\lambda}\ket{k_-}\ket{0}, $ and $\lambda = \frac{1 - \sqrt{1-4b}}{2}$. Note that $W'_+$ and $W'_-$ are both expressed as a product of projectors such that $W'_+ = \Pi_{+,0} \Pi_{+,1} \Pi_{+,0}$ and $W'_- = \Pi_{-,0} \Pi_{-,3} \Pi_{-,2} \Pi_{-,1} \Pi_{-,0}$ . 

From here, we exploit the property that $W'_+$ and $W'_-$ and the corresponding projectors that form each of them act on orthogonal subspaces to write $W' = W'_+ + W'_- =    \Pi_0 \Pi_3 \Pi_2 \Pi_1 \Pi_0$ where $\Pi_0 = \ketbra{k_-} \otimes  \ketbra{0} +\ketbra{k_+} \otimes \ketbra{0} $, $\Pi_1 = \Pi_{+,1} + \Pi_{-,1}$, $\Pi_2 = \Pi_{+,1} + \Pi_{-,2}$ and $\Pi_3 = \Pi_{+,1} + \Pi_{-,2}$. It can easily be verified that each projector satisfies $\Pi_i^2 = \Pi_i \ \forall i \in \{0,1,2,3\}$. We have therefore constructed a set of projectors $\{ \Pi_0,\Pi_1, \Pi_2, \Pi_3, \Pi_0 \}$ on the extended Hilbert space that forms the required quasiprobability distribution. 

So far, the construction has been showcased for a simple $W$ with one positive and one negative eigenvalue. Remarkably, this serves as a sufficient basis to establish the general case. The reasoning behind this is that even for more general witness operators $W$ featuring multiple positive and negative eigenvalues, to assign each of these eigenvalues a product of projectors, acting within its own subspace. This is feasible because of the orthogonality between each of these subspaces. Consequently, the same line of reasoning holds, making the approach adaptable and extendable for the general witness operator $W$. 


\section{Relationship with anomalous weak values} 

A quantum weak value with respect to a Hermitian operator $A$ is defined \cite{Aharonov1988} as $A_w \coloneqq\braket{\phi\mid A \mid \psi}/\braket{\phi \mid \psi}$ for a quantum state initially prepared in the state $\ket{\psi}$ and a subsequent postselected state $\ket{\phi}$. Most notably,  even if $A$ is Hermitian, $A_w$ may not always be real-valued and that the imaginary and real portions of the weak value manifests itself differently\cite{Jozsa2007, Bartlett2012, Wagner2021}. We say that a weak value $A_w$ as anomalous whenever its real portion $Re(A_w)$ is either smaller than the smallest eigenvalue of $A$ or larger than the largest eigenvalue of $A$. Considering the spectral decomposition, $A=\sum_a a  \pi^a$ and incorporating it into the definition of weak value, it becomes evident that $Re(A_w)=\sum_a Re(\pi_w^a)$. 

An anomalous weak value for any observable therefore implies anomalous weak value for projectors. Given that $\sum_a \pi^a_w = \openone$, if one projector has $Re(\pi^a_w)>1$, then another must have $Re(\pi^{a'}_w)<0$. without loss of generality, we can always assign the anomalous weak value to the projector $\pi$ having $Re(\pi_w)<0$. The following theorem relates the anomalous weak value to the negativity of the quasiprobability distribution.

\begin{theorem}[Anomalous weak values for resourceful states] \label{thm::anomalousWeak}
Consider any quantum system with a corresponding set of classical states $\mathcal{C}$. Then, for every $\rho \notin \mathcal{C}$, there must exist a collection of pairs of projectors $\qty{\qty( \pi_1(j), \pi_2(j))}_j$ and a conical combination (sum over positive coefficients) of weak values $\mathcal{W}(\rho) = \sum_j k(j) \braket{\qty(\pi_2(j))_w}^{\pi_1(j)}_{\rho\otimes \ketbra{0}}$, $k(j) \geq 0$, that satisfies $\mathcal{W}(\rho) <0$ and $\mathcal{W}(\sigma) \geq 0$ for every $\sigma \in \mathcal{C}$. 
    
Furthermore, there must be at least one $j$ such that the weak value $\expval{(\pi_2(j))_w}^{\pi_1(j)}_{\rho\otimes \ketbra{0}}$ is anomalous and has a negative real part. The above-mentioned conical combination of weak values is, therefore, negative only if the anomalous weak values are sufficiently negative. 
\end{theorem}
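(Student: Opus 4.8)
The plan is to leverage Theorem~\ref{thm::pseudoprob} and its proof directly: for $\rho \notin \mathcal{C}$ we already have a witness $W$ on $\mathcal{H}$ with $\Tr[W\rho] < 0$ and $\Tr[W\sigma] > 0$ for $\sigma \in \mathcal{C}$, and the lifted operator $W' = W \otimes \ketbra{0}$ decomposes as a product of projectors $W' = k\,\Pi_0 \cdots \Pi_N$ with $k \geq 0$. The first step is to rewrite the quantity $\Tr[\Pi_0 \cdots \Pi_N\,\rho \otimes \ketbra{0}]$ as a weak value (or a sum of weak values). For a single product of projectors, observe that $\Tr[\Pi_0 \Pi_1 \cdots \Pi_N\,\tau]$ can be massaged by cyclicity and insertion/absorption of the end projectors into the form $\braket{\phi | \Pi_{i_0} | \psi}$-type expressions; concretely, using the construction in Section~IV where each $\Pi_0$ appears at both ends, one sets $\pi_1 = \Pi_0$ (the preselection projector), identifies a middle projector as $\pi_2$, and absorbs the remaining projectors into the definition of the post-selected state. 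So the first milestone is an algebraic identity expressing $P_0(\rho) = \Tr[\Pi_0 \cdots \Pi_N \rho\otimes\ketbra{0}]$ as $\braket{\phi|\psi}\,\braket{\phi|\pi_2|\psi}/\braket{\phi|\psi}$ or, more generally for the multi-eigenvalue witness, as a conical combination $\sum_j k(j)\,\braket{\phi_j|\psi_j}\cdot\expval{(\pi_2(j))_w}^{\pi_1(j)}_{\rho\otimes\ketbra{0}}$ where the overall nonnegative weights $k(j)$ absorb the (positive) normalization factors $\braket{\phi_j|\psi_j}$ and the eigenvalue magnitudes.

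The second step is to define $\mathcal{W}(\rho) \coloneqq \sum_j k(j)\,\expval{(\pi_2(j))_w}^{\pi_1(j)}_{\rho\otimes\ketbra{0}}$ with exactly those $k(j) \geq 0$, and verify the sign conditions. Here I would argue that $\mathcal{W}(\rho)$ is (a positive multiple of) $\Tr[W'\,\rho\otimes\ketbra{0}] = \Tr[W\rho] < 0$, while $\mathcal{W}(\sigma) = \text{(positive multiple of)}\,\Tr[W\sigma] \geq 0$ for $\sigma \in \mathcal{C}$ — this is immediate from the properties of $W$ established in Theorem~\ref{thm::pseudoprob}, provided the reweighting by $\braket{\phi_j|\psi_j}$ is indeed nonnegative, which holds because those overlaps arise from the explicit projector construction where the post-selected state is a projection of the pre-selected one onto a nested sequence of subspaces, giving real nonnegative overlaps (as in Eqs.~(\ref{eqn::wPlus})--(\ref{eqn::wMinus})).

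The third step establishes anomalousness. Since $\mathcal{W}(\rho) = \sum_j k(j)\,\expval{(\pi_2(j))_w}^{\pi_1(j)}_{\rho\otimes\ketbra{0}} < 0$ with all $k(j) \geq 0$, at least one term must have negative real part, i.e. $\operatorname{Re}\expval{(\pi_2(j))_w}^{\pi_1(j)}_{\rho\otimes\ketbra{0}} < 0$. Since $\pi_2(j)$ is a projector, its eigenvalues are $0$ and $1$, so a negative real part of its weak value is by definition anomalous. Finally, the last sentence of the statement is essentially a restatement: because the $k(j)$ are nonnegative and the non-anomalous terms contribute a real part in $[0,1]$, the sum $\mathcal{W}(\rho)$ can only be driven negative by the anomalous contributions, and they must be negative enough to overcome the nonnegative contributions of the others.

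The main obstacle I anticipate is \emph{step one combined with the nonnegativity bookkeeping in step two}: one must choose the grouping of the projector product into a (pre-selection, weak-observable, post-selection) triple so that (i) the leftover projectors genuinely fold into a legitimate post-selected pure state $\ket{\phi_j}$, (ii) the residual scalar $\braket{\phi_j | \psi_j}$ is real and nonnegative so it can be folded into $k(j)$, and (iii) this works simultaneously for every eigenvalue-block of a general multi-eigenvalue $W$. The explicit decompositions in Section~IV (with $\Pi_0$ at both ends and the $\ket{\psi_1},\ket{\psi_2},\ket{\psi_3}$ chain) are exactly what makes this tractable, so I would build the weak-value representation on top of that construction rather than an abstract product-of-projectors argument; the orthogonality of the eigen-blocks then lets the conical combination be assembled block by block.
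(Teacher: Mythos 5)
Your overall architecture matches the paper's: split $W'=W\otimes\ketbra{0}$ into positive- and negative-eigenvalue blocks, express each block's expectation value as a nonnegative coefficient times the weak value of a projector, then argue by pigeonhole that $\mathcal{W}(\rho)<0$ forces at least one projector weak value to have negative real part, which is automatically anomalous since a projector's spectrum is $\{0,1\}$. That last step, and the sign argument for $\mathcal{W}(\sigma)$, are fine. But there is a genuine gap exactly at the point you flag as the main obstacle, and the resolution you sketch for it is based on a false premise. You assert that the residual overlaps folded into $k(j)$ are ``real and nonnegative'' because they come from a nested-subspace projector chain. For a negative-eigenvalue block this cannot be true: by construction the full chain of overlaps multiplies out to the negative eigenvalue itself (e.g.\ $\bra{k_-,0}\Pi_{-,3}\Pi_{-,2}\Pi_{-,1}\ket{k_-,0}=-b$, and indeed $\ket{\psi_1}$ in Section IV has $\braket{\psi_1|k_-,0}=-\sqrt{\lambda}<0$). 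If every scalar you fold into $k(j)$ were nonnegative and the weakly measured operator is a projector, then $k(j)\,\Tr[\pi_1\pi_2\tau]/\Tr[\pi_1\tau]$ would be nonnegative and could never reproduce $\Tr[W'_-\tau]\le 0$. So the whole content of the proof is to choose the split so that the negative overlap lands inside the weak value's numerator $\Tr[\pi_1(j)\pi_2(j)\,\rho\otimes\ketbra{0}]$ while the prefactor $k(j)$ stays nonnegative; your bookkeeping as written puts the negativity in the wrong place (or nowhere).

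The paper resolves this not by regrouping the Theorem~1 / Section~IV product decomposition at all, but by a purpose-built identity: it writes $-\ketbra{0}=c\,\ket{0}\braket{0|\psi_1}\braket{\psi_1|\psi_2}\braket{\psi_2|0}\bra{0}$ with $\ket{\psi_1}=\tfrac{1}{\sqrt{2}}(\ket{0}-\ket{1})$ and $\ket{\psi_2}$ depending on a parameter $p\in(1/2,1)$ chosen so that the overlap product $1-p-\sqrt{p(1-p)}$ is negative and hence $c=2/(\sqrt{p(1-p)}-1+p)>0$. Substituting this into $W'_-$ yields $\Tr[W'_-\,\rho\otimes\ketbra{0}]=d\,\langle(\ketbra{k_-}\otimes\ketbra{\psi_2})_w\rangle^{\ketbra{k_-}\otimes\ketbra{\psi_1}}_{\rho\otimes\ketbra{0}}$ with $d=bc\Tr[\ketbra{k_-}\otimes\ketbra{\psi_1}\,\rho\otimes\ketbra{0}]\ge 0$, so the weak value itself carries the negative sign. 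To repair your argument you would need to supply an identity of this kind (or an equivalent explicit choice of $\pi_1(j),\pi_2(j)$ with a verified negative Kirkwood--Dirac numerator); the ``nested subspaces give nonnegative overlaps'' heuristic does not deliver it, and without it the conical-combination representation with $k(j)\ge 0$ is not established.
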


\begin{proof}
From the preceding discussions, it is evident that a resource witness $W$ can always be mapped to another operator $W' = W\otimes \ketbra{0}$ acting on an extended Hilbert space $\mathcal{H}\otimes \mathcal{H}_2$. Note that $W'$ can be expressed as a sum of positive and negative contributions. For simplicity, let us initially consider the case $W' = W'_+ + W'_-$, which comprises one positive ($W'_+$) and one negative ($W_-'$) eigenvalue contribution. 

Firstly, we focus on the positive contribution. By definition, we have $W'_+ = a \ketbra{k_+}\otimes \ketbra{0}$ with $a > 0 $. This can be trivially re-formulated as $\Tr[W'_+ \ \rho] = a \Tr[\ketbra{k_+} \rho ] \langle \ketbra{k_+}\otimes \ketbra{0}_W \rangle_{\rho\otimes \ketbra{0}}^{\ketbra{k_+}\otimes \ketbra{0}}$. The expectation value of $W'_+$ is thus expressed in terms of a weak value, and in this case, the weak value is non-negative. Note that $\pi_1(0) = \pi_2(0) = \ketbra{k_+}\otimes \ketbra{0} $ are projectors.

Next, we shift our attention to the negative contribution. By definition, $W'_- = -b \ketbra{k_-}\otimes \ketbra{0}$, where $b> 0$. Considering the expectation value $\Tr[W'_- \rho \otimes \ketbra{0}]$, we define $\ket{\psi_1} \coloneqq \frac{1}{\sqrt{2}}(\ket{0} - \ket{1})$ and $\ket{\psi_2} \coloneqq (\sqrt{1-p}\ket{0} + \sqrt{p}\ket{0})$. One may verify that 
    \begin{equation} \label{nwv}
        \ket{0}\braket{0 | \psi_1}\braket{\psi_1 | \psi_2} \braket{\psi_2 | 0} \bra{0} = \frac{1-p - \sqrt{p(1-p)}}{2} \ketbra{0}
    \end{equation}
From the above Eq.~(\ref{nwv}), the coefficient $ (1-p - \sqrt{p(1-p)})$ is negative when $1/2 < p < 1$. By choosing the value of the parameter $p$ within this range, we obtain
\begin{equation} \label{nwv1}
    -\ketbra{0} =  c \ \ket{0}\braket{0 | \psi_1}\braket{\psi_1 | \psi_2} \braket{\psi_2 | 0} \bra{0}
\end{equation}
where $c = 2/(\sqrt{p(1-p)}-1+p) > 0$. Now, substituting this into the expression for $W_-'$, we get
\begin{equation}
W_-' = bc\ketbra{k_-} \otimes \ket{0}\braket{0 | \psi_1}\braket{\psi_1 | \psi_2} \braket{\psi_2 | 0} \bra{0}
\end{equation} 
where $b,c > 0$. The expectation value of $W_-'$ is evaluated as follows
\begin{eqnarray}
&& \Tr[W_-'\rho \otimes \ketbra{0}] \nonumber \\
&=& bc \Tr[ \ketbra{k_-} \otimes \ket{0}\braket{0 | \psi_1}\braket{\psi_1|\psi_2} \braket{\psi_2|0} \bra{0} \rho \otimes \ketbra{0}] \nonumber \\
&=& bc \text{Tr} \big[\qty(\openone \otimes \ketbra{0})\ketbra{k_-} \otimes \ket{\psi_1} \braket{\psi_1|\psi_2} \bra{\psi_2}  \nonumber \\
&& \times \qty(\openone \otimes \ketbra{0}) \ \rho \otimes \ketbra{0} \big] \nonumber \\
&=& bc \Tr[\ketbra{k_-} \otimes  \ket{\psi_1}\braket{\psi_1|\psi_2} \bra{\psi_2} \  \rho \otimes \ketbra{0}] \nonumber \\
&=& d  \ \Big\langle \big(\ketbra{k_-} \otimes \ketbra{\psi_2}\big)_w \Big\rangle^{\ketbra{k_-} \otimes  \ketbra{\psi_1}}_{\rho \otimes \ketbra{0}}
\end{eqnarray} 
where $d \coloneqq bc \Tr[\ketbra{k_-} \otimes  \ketbra{\psi_1} \rho \otimes \ketbra{0}] >0$. In the final line, the expectation value of $W_-'$ is expressed as a weak value. Furthermore, it is the weak value of the projection operator $\ketbra{k_-} \otimes \ketbra{\psi_2}$, whose spectrum is necessarily non-negative. If $\rho \notin \mathcal{C}$, then $\Tr[W_-'\rho \otimes \ketbra{0}]  < 0$, which is clearly negative and implies that the corresponding weak value is also negative and hence, anomalous. Note that $\pi_1(1) = \ketbra{k_-} \otimes \ketbra{\psi_2} $  and $\pi_1(1) = \ketbra{k_-} \otimes  \ketbra{\psi_1}$ are projectors.

Thus, the above argument infers that the expectation value of $W_-'$ can indeed be expressed as the intended conical combination of weak values as follows
\begin{eqnarray}
\Tr[W'\rho\otimes \ketbra{0}] &=& k_0 \ \Big\langle \pi_1(0)_w \Big\rangle^{\pi_2(0)}_{\rho \otimes \ketbra{0}} \nonumber \\
&& + k_1 \ \Big\langle \pi_1(1)_w  \Big\rangle^{\pi_2(1)}_{\rho \otimes \ketbra{0}} \ \ ;
\end{eqnarray}
where $k_0, k_1 >0$ are positive real numbers. Hence, the Theorem stands proven for the case where $W' = W'_+ + W'_-$ has exactly one positive and one negative eigenvalue contribution. Importantly, this argument can be straightforwardly extended to any number of positive and negative contributions, demonstrating the proof for the general case.
\end{proof}


\section{Informational completeness and resource identification} 

An informationally complete quasiprobability is one that contains complete information about a quantum state. It can be shown that for any quantum system with a well defined notion of classical states $\mathcal{C}$, one can formulate a quasiprobability distribution that identifies resourceful states, while simultaneously containing complete information about the state.

\begin{theorem} [Informationally complete quasiprobability] \label{thm::infoComplete}
    For any quantum system with corresponding set of classical states $\mathcal{C}$, there always exists quasiprobability distribution $P(y \mid \rho )$ that is an informationally complete representation of $\rho$, and is negative if and only if $\rho \notin C$. 
\end{theorem}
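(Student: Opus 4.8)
\textbf{Proof plan for Theorem~\ref{thm::infoComplete}.}
The plan is to combine the separating-witness construction from Theorem~\ref{thm::pseudoprob} with a standard informationally complete quasiprobability on the extended Hilbert space, and then glue the two together by coarse-graining into a single distribution whose negativity tracks exactly the resource. First I would recall that on any finite-dimensional space $\mathcal{H}$ there exists a Kirkwood-Dirac type quasiprobability that is informationally complete: the Wigner-like construction sketched in the Preliminaries (expressing $W(x,p)$ as a postprocessing of $P(x,y,p)$ built from an ordered triple of projectors) already shows this for the continuous-variable case, and for finite $d$ one can use two mutually unbiased or otherwise generic bases $\{\ket{a_i}\}$, $\{\ket{b_j}\}$ so that $P(i,j\mid\tau)=\Tr[\ketbra{a_i}\ketbra{b_j}\tau]$ determines $\tau$ uniquely. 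I would apply this to states $\tau$ on the \emph{extended} space $\mathcal{H}\otimes\mathcal{H}_2$, in particular to $\tau=\rho\otimes\ketbra{0}$, which is informationally complete for $\rho$ since the extra qubit factor is fixed.

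Next I would bring in the witness. By Theorem~\ref{thm::pseudoprob} (and Corollary~\ref{cor::Geom} if one wants the geometric flavour) there is a set of projectors $\{\Pi_0,\ldots,\Pi_N\}$ on $\mathcal{H}\otimes\mathcal{H}_2$ with $\Tr[\Pi_0\cdots\Pi_N\,\rho\otimes\ketbra{0}]<0$ exactly when $\rho\notin\mathcal{C}$ and $>0$ when $\rho\in\mathcal{C}$, giving the two-outcome coarse-grained distribution $(P_0,P_1)$ with $P_0<0$ iff resourceful. The key step is to \emph{interleave} this diagnostic distribution with the informationally complete one into a single quasiprobability $P(y\mid\rho)$ indexed by $y$ ranging over the union of two disjoint label sets: one block of labels carries the informationally complete outcomes (so $\rho$ can be reconstructed by marginalising/postprocessing onto that block), and a second block carries the witness-diagnostic outcomes. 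Since any union of quasiprobability events is again a valid quasiprobability and the overall sum is $1$, one can take a convex-type mixture $P(y\mid\rho)=\alpha\,P_{\mathrm{IC}}(y\mid\rho)+(1-\alpha)\,P_{\mathrm{wit}}(y\mid\rho)$ with $0<\alpha<1$, where $P_{\mathrm{IC}}$ is supported on the first block and $P_{\mathrm{wit}}$ on the second; this is itself of Kirkwood-Dirac type because both factors are, and because they live on disjoint outcome labels no cross terms appear.

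The remaining work is to verify the two required properties of this glued distribution. Informational completeness is immediate: restricting $P(y\mid\rho)$ to the first block and dividing by $\alpha$ recovers $P_{\mathrm{IC}}$, which already determines $\rho$. For the negativity equivalence, the ``only if'' direction needs care: I must ensure that $P_{\mathrm{IC}}$ itself is nonnegative on \emph{all} states whenever we want the combined distribution's sign to be governed solely by the witness block — but that is false for a genuinely informationally complete Kirkwood-Dirac distribution (those are negative on some states, indeed on all non-classical-enough states). So the honest route is the reverse: choose the informationally complete bases so that $P_{\mathrm{IC}}(y\mid\sigma)\geq 0$ precisely for $\sigma\in\mathcal{C}$ and negative otherwise, i.e. make the \emph{informationally complete} part already do the resource identification, and then the diagnostic block is not even needed. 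Concretely, I would argue that one can enlarge the witness construction: instead of a single separating hyperplane, take a whole family of witnesses $\{W_k\}$ whose positivity cones intersect exactly in $\mathcal{C}$, map each to $W'_k=W_k\otimes\ketbra{0}$, decompose each as a product of projectors via \cite{Oikhberg1999}, and assemble all the resulting projector strings into one large ordered collection $\{S_0,\ldots,S_M\}$ on a suitably padded extended space; choosing this family to additionally be tomographically complete (its spanning the full operator space) makes the associated quasiprobability informationally complete, while the intersection-of-cones property makes it nonnegative iff $\rho\in\mathcal{C}$.

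\textbf{Main obstacle.} The hard part will be the simultaneous requirement: the projector collection must be rich enough to be informationally complete (span all of operator space, forcing negativity on \emph{some} states) yet structured so that the set of states on which the distribution is everywhere nonnegative is \emph{exactly} $\mathcal{C}$ and not something larger or smaller. Reconciling these means controlling, for the assembled family of product-of-projector decompositions, the precise zero set of ``all outcomes nonnegative'' — essentially showing the convex cone generated by the chosen witnesses has $\mathcal{C}$ as its polar intersected with the state space, while the witnesses also linearly span the Hermitian operators. I expect this to go through by a density/compactness argument (the set of valid witnesses for a fixed $\rho\notin\mathcal{C}$ is open and nonempty, and $\mathcal{C}$ being closed convex is the intersection of all its supporting half-spaces), but making the bookkeeping of the extended Hilbert space dimension and the ordering of the combined projector set precise is where the real effort lies.
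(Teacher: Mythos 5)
Your opening plan --- concatenate an informationally complete block with the witness block from Theorem~\ref{thm::pseudoprob} and coarse-grain --- is exactly the route the paper takes, but you then talk yourself out of it and pivot to a much harder construction that you do not complete. The worry that derails you (``a genuinely informationally complete Kirkwood--Dirac distribution must be negative on some states'') is true of a joint two-basis distribution $\Tr[\ketbra{a_i}\ketbra{b_j}\rho]$, but it is beside the point: informational completeness of the final $P(y\mid\rho)$ only requires that $\rho$ be recoverable from the list of its values, and for that it suffices to use the \emph{single-projector marginals} $P(x_k=0\mid\rho)=\Tr[\Pi_k\,\rho\otimes\ketbra{0}]$ of an informationally complete set of rank-one projectors (e.g.\ the SIC directions), embedded as the first $d+1$ slots of one long ordered projector string. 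Each such marginal is an ordinary Born probability, hence non-negative for \emph{every} state, classical or not, so the informationally complete block contributes no negativity and the sign of the distribution is governed solely by the witness marginal $P(x_{d+1}=0,\ldots,x_N=0\mid\rho)=\Tr[\Pi_{d+1}\cdots\Pi_N\,\rho\otimes\ketbra{0}]$. This is the single missing idea; with it, the gluing you first proposed goes through essentially as in the paper, with no need for a convex mixture parameter $\alpha$ --- the two blocks are just different marginals of one underlying distribution.

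The fallback route you sketch instead --- a family of witnesses $\{W_k\}$ whose positivity cones intersect exactly in $\mathcal{C}$ while also spanning operator space --- is both unnecessary and genuinely problematic: a general closed convex $\mathcal{C}$ is not a polytope, so no finite family of separating hyperplanes cuts it out exactly, and the ``one large ordered collection'' of projector strings would have to be infinite; the bookkeeping you defer is not mere bookkeeping. Note also that the construction is permitted to depend on the given $\rho$ (the paper chooses the witness block ``for any given $\rho$''), so a single separating hyperplane suffices and no universal family is needed. One caveat that applies to the paper's proof as much as to your plan: the normalising remainder outcome $P(y=d+2\mid\rho)=1-\sum_{y\leq d+1}P(y\mid\rho)$ can be negative even for classical states (for SIC projectors $\sum_k\Tr[\Pi_k\rho]$ exceeds $1$), so the ``negative only if resourceful'' direction requires rescaling the informationally complete marginals or handling the remainder more carefully --- a point worth fixing in any write-up.
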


\begin{proof}
Let $\{\Pi_0, \ldots, \Pi_d \}$ be any set of informational complete, rank 1 projectors (such as SIC-POVMs). For any given $\rho$, let $\{\Pi_{d+1}, \ldots, \Pi_N \}$ be the set of projectors satisfying $\Tr[\Pi_{d+1}\ldots \Pi_N \rho \otimes \ketbra{0}] < 0 \ \forall \rho \notin \mathcal{C}$ and  $\Tr[\Pi_{d+1}\ldots \Pi_N \sigma \otimes \ketbra{0}] \geq 0 \  \forall \sigma\in \mathcal{C}$. The existence of such a set of projectors has already been shown in Theorem~\ref{thm::pseudoprob}. We observe that the concatenation of these two ordered sets, $\{ \Pi_0, \ldots, \Pi_N \}$ forms another valid quasiprobability distribution.

Define $P(x_0,\ldots, x_N \mid \rho ) \coloneqq \Tr[\pi_{x_0}\ldots \pi_{x_N}\rho \otimes \ketbra{0}]$, where $x_i \in \{0,1\}$ and $\pi_{x_i=0} \coloneqq \Pi_i$ and $\pi_{x_i=1}\coloneqq \openone - \Pi_i$. For any $0 \leq k \leq d$, consider the marginal 
\begin{eqnarray}
P(x_k = 0  \mid \rho ) &=& \sum_{\substack{ x_i \\
    0 \leq i \leq N \\
    i \neq k}}P(x_0,\ldots, x_k = 0 , \ldots  x_N \mid \rho ) \nonumber \\ 
    &=& \Tr[\Pi_k \ \rho \otimes \ketbra{0}] \geq 0
    \end{eqnarray} 
    Since $\{\Pi_0, \ldots, \Pi_d \}$ is informationally complete, it is clear that  the quasiprobability $P(x_0,\ldots, x_N \mid \rho )$ is also informationally complete. 
    
Furthermore, we have  
 \begin{align}
        &P(x_d+1 = 0 , \ldots, x_N = 0   \mid \rho )  \nonumber \\
        &\qquad =  \sum_{\substack{ x_i \\
    0 \leq i \leq d \\
    i \neq k}}P(x_0,\ldots, x_d, x_d+1 = 0 , \ldots, x_N = 0 \mid \rho ) \nonumber \\ 
    &\qquad =  \Tr[\Pi_{d+1} \ldots \Pi_{N} \ \rho \otimes \ketbra{0}]
    \end{align} 
which is negative only if $\rho \notin \mathcal{C}$.

Therefore, from $\{\Pi_0, \ldots, \Pi_N \}$, we have as marginals, the quasiprobability $P(x_k = 0  \mid \rho ),$ $0\leq k \leq d$ which is informationally complete and always non-negative, as well as $P(x_{d+1} = 0 , \ldots, x_N = 0   \mid \rho ),$ which is negative only when $\rho$ is a resourceful state. Thus, we can define the following
    \begin{eqnarray}
    P(y = k \mid \rho ) &\coloneqq& P(x_k = 0  \mid \rho )\;,\; k= 0, \ldots, d \nonumber \\
        P(y= d+1 \mid \rho )   &\coloneqq& P(x_{d+1} = 0 , \ldots, x_N = 0   \mid \rho ) \nonumber \\
        P(y= d+2 \mid \rho ) &\coloneqq& 1- \sum_{y-0}^{d+1} P(y \mid \rho) \nonumber
    \end{eqnarray} 
where each of the terms $P(y \mid \rho)$ is a sum of quasiprobability events taken from the underlying quasiprobability distribution $P(x_0,\ldots, x_N \mid \rho )$, so $P(y \mid \rho)$ is again a valid quasiprobability. They are clearly all real number and sum to 1. $P(y \mid \rho )$ is therefore  an informationally complete representation of the state $\rho$, consequently, is also negative \textit{iff} $\rho \notin \mathcal{C}$.
\end{proof}


\section{A qubit example}

We consider the simplest example on a qubit to illustrate our key findings. Consider a single qubit system where the quantum resource of interest is quantum coherence, and the set of classical states is defined as $\mathcal{C} \coloneqq \{ \sigma = (\openone + r \hat{z}\cdot \Vec{\sigma})/2 : \abs{r} \leq 1 \}$. The set $\mathcal{C}$ can be visualized as the set of qubit states lying on the $z$-axis in the Bloch sphere representation of the qubit. In this context, any qubit state not lying on the $z$-axis is said to have quantum coherence, designating it as a resourceful state. A qubit enables us to  visualise the geometric interpretation of the quasiprobability as outlined in Corollary~\ref{cor::Geom}.

For the construction of the quasiprobability, we will focus on a pure qubit state $\ket{\psi}$. Recall that every pure qubit state is expressed as $\ket{\psi} = \cos(\theta/2)\ket{0}+ e^{i\phi} \sin(\theta/2)\ket{1}$ where $\theta$ is the polar angle and $\phi$ is the azimuthal angle in the Bloch sphere representation. Due to rotational symmetry about the $z$-axis in the system, it suffices to set $\phi = 0$, so we will consider only $\ket{\psi} = \cos(\theta/2)\ket{0}+ \sin(\theta/2)\ket{1}$. The corresponding density operator $\rho = \ketbra{\psi}$ and consider the following resource witness 
\begin{equation}
    W = \frac{\sigma_0 - \rho - \Tr[\sigma_0(\sigma_0-\rho)] \openone}{\norm{\rho - \sigma_0}_F}
\end{equation}
where $\sigma_0$ is the closest state in the set $\mathcal{C}$ to $\rho$. The state $\ket{\psi} = \cos(\theta/2)\ket{0}+ \sin(\theta/2)\ket{1}$ has Cartesian coordinates $(x,y,z) = (\sin{\theta}, 0 ,\cos(\theta))$ in the Bloch sphere representation, therefore, the closest quantum state on the $z$-axis has Cartesian coordinates $(0,0,\cos(\theta))$, which corresponds to the density matrix $\sigma_0 = (\openone + \cos(\theta))/2$. Substituting in the expressions for $\sigma_0$ and $\rho$, we obtain $W = -\sigma_x$.

Following Theorem~\ref{thm::pseudoprob}, we extend the Hilbert space by mapping $W$ to $W' = -\frac{1}{4} W \otimes \ketbra{0}$. Let $W'_+ \coloneqq \frac{1}{4} \ketbra{-} \otimes \ketbra{0}$ and $W'_- \coloneqq -\frac{1}{4}\ketbra{+}\otimes \ketbra{0}$, such that $W' = W'_+ + W'_-$. We then define the following projection operators 
\begin{eqnarray}
    \Pi_{+,0} &\coloneqq& \ketbra{-}\otimes \ketbra{0} \\ 
    \Pi_{+,1} &\coloneqq& \qty(\frac{1}{2}\ket{-}\ket{0} + \sqrt{\frac{3}{4}} \ket{-}\ket{1} ) \cdot (h.c.)
\end{eqnarray} 
and
\begin{eqnarray}
    \Pi_{-,0} &\coloneqq&  \ketbra{+}\otimes \ketbra{0} \\ 
    \Pi_{-,1} &\coloneqq& \qty(\sqrt{\frac{1}{2}}\ket{+}\ket{0} + \sqrt{\frac{1}{2}} \ket{+}\ket{1} ) \cdot (h.c.) \\
    \Pi_{-,2} &\coloneqq& \ketbra{+}\otimes \ketbra{1} \\
    \Pi_{-,3} &\coloneqq& \qty(\sqrt{\frac{1}{2}}\ket{+}\ket{0} - \sqrt{\frac{1}{2}} \ket{+}\ket{1} ) \cdot (h.c.)
\end{eqnarray}

It may be directly verified using the above definitions that $W'= \Pi_0 \Pi_3 \Pi_2 \Pi_1 \Pi_0$ where $\Pi_0 = \openone \otimes \ketbra{0} $, $\Pi_1 = \Pi_{+,1} + \Pi_{-,1}$, $\Pi_2 = \Pi_{+,1} + \Pi_{-,2}$ and $\Pi_3 = \Pi_{+,1} + \Pi_{-,2}$. Additionally, it holds that $\Tr[ \Pi_0 \Pi_3 \Pi_2 \Pi_1 \Pi_0 \rho] < 0$ if $\rho \notin \mathcal{C}$ and $\Tr[ \Pi_0 \Pi_3 \Pi_2 \Pi_1 \Pi_0 \rho]$ for every $\sigma \in \mathcal{C}$, consistent with the statement in Theorem~\ref{thm::pseudoprob}.

Next, we evaluate the expectation value of $W'_+$
\begin{equation}
    \Tr[W'_+\rho\otimes \ketbra{0} ] = k_0 \ \Big\langle\big(\pi_2(0)\big)_w\Big\rangle^{\pi_1(0)}_{\rho_\otimes \ketbra{0}},
\end{equation} 
where $k_0 = \frac{1}{4} \Tr[ \ketbra{-}  \rho]$ and $\pi_1(0) = \pi_2(0) = \ketbra{-} \otimes \ketbra{0}$.

Likewise, the expectation value of $W'_-$ is evaluated as
\begin{eqnarray}
    \Tr[W_-\rho\otimes \ketbra{0}] &=& \frac{c}{4} \Tr[\ketbra{+} \rho \otimes \ketbra{-} \ketbra{\psi} \ketbra{0}] \nonumber \\
    &=& k_1 \ \Big\langle\big(\pi_2(1)\big)_w\Big\rangle^{\pi_1(1)}_{\rho \otimes \ketbra{0}}
\end{eqnarray} 
where $k_1 \coloneqq c\Tr[\ketbra{+} \otimes \ketbra{-} \rho \otimes \ketbra{0} ] /4$, $\pi_1(1) \coloneqq \ketbra{+} \otimes \ketbra{-}$, $\pi_2(1) \coloneqq \ketbra{+} \otimes \ketbra{\psi}$. Consequently, it follows that $\Tr[W'\rho \otimes \ketbra{0}] = k_0 \expval{(\pi_2(0))_W}^{\pi_1(0)}_{\rho_\otimes \ketbra{0}} + k_1 \expval{(\pi_2(1))_W}^{\pi_1(1)}_{\rho \otimes \ketbra{0}}$, which represents a conical combination of weak values. Choosing $\rho = \ketbra{+}$ which is a resourceful state, we obtain the value $\expval{(\pi_2(1))_W}^{\pi_1(1)}_{\rho \otimes \ketbra{0}} = (1-\sqrt{3})/(2\sqrt{2}) < 0 $, thus confirming that the weak value is both negative and anomalous. This  is in line with the statement from Theorem~\ref{thm::anomalousWeak}.


\section{Discussion and Conclusion} 

Quantum quasiprobabilities have traditionally been valued for their ability to provide key signs of a breakdown of classical physics and the emergence of specific quantum effects. The present study considers the limits of the quantum quasiprobabilty approach by studying their ability to identify more general notions of nonclassicality. By employing Kirkwood-Dirac type quasiprobability distributions, we were able to show that this type of quasiprobability distribution is already able to identify arbitrary notions of quantum resources. In particular, the Kirkwood-Dirac type quasiprobabilities we considered violate Kolmogorov's positivity axiom, when given a resourceful quantum state as input. For nonresourceful quantum states, it always outputs a positive probability distribution. A common property of several quasiprobabilty distributions, such as the Wigner distribution, the Glauber-Sudarshan P function\cite{Glauber1963,Sudarshan1963} and Husimi's Q function \cite{Husimi1940, Tan2019}, is that they contain full information about the quantum state. We have shown that Kirkwood-Dirac type quasiprobability distributions can also be made informationally complete, like the Wigner function, without compromising its ability to identify arbitrary resourceful states. The total negativity can also be imbued with a direct geometric interpretation, also without compromising its other qualities.

We also related every negative quasiprobability probabilty outcome to sufficiently strong anomalous weak values. Thus, the emergence of strongly anomalous weak values coincides with the emergence of negative quasiprobabilities in our approach. This is in line with existing results \cite{Spekkens2008, Pussey2014} that demonstrate the conflict of negativity (quasiprobability) and anomalous weak values with non-contextual ontological models, suggesting that quasiprobability and anomalous weak values closely related equivalent quantum features. It is unclear at present whether there is a more precise quantitative relationship between negative quasiprobabilities and the onset of anomalous weak values. We leave this for future work.

Finally, our quasiprobability approach for identifying quantum resource brings out another crucial feature: quantum resources depend not only on the quantum state but also on the type of measurements performed on the state. In order to yield a negative quasiprobability using Kirkwood-Dirac type quasidistributions, incompatible measurement projectors are necessary. One possible interpretation of this is that all quantum resources derive their nonclassical features from measurement incompatibility. It is interesting to also consider whether this can be extended to a more general class of measurements such as noisy projectors (unbiased POVM) \cite{Busch1996book} and and whether such can also have a meaningful interpretation. Our results also suggests possible new avenues for the use of quasiprobability distributions in applications such as quantum metrology. Separately, it has been shown that there always exists a parameter estimation task where an arbitrary quantum resource can provide an advantage\cite{Tan2021}, and here, we have shown that arbitrary quantum resource must have negative quasiprobabilities which requires measurement incompatibility. This is in line with recent studies that have shown that incompatible measurements and contextuality provide a quantum advantage in metrological tasks \cite{jae2023a, jae2023b}.  

In sum, the result that every quantum resource corresponds to an informationally complete quasiprobability distribution and gives rise to anomalous weak values significantly broadens the scope for corroborating non-classicality and its advantages in various quantum theory-based applications. Simultaneously, it revives the foundational question \cite{Pan2020, Catani2022, Catani2023a, Catani2023b} of whether different forms of non-classical phenomena, such as contextuality, nonlocality, indefinite-causal order, violations of macro-realism through Leggett-Garg inequality and interference phenomena stem from a more fundamental and unique non-classical feature or not.


\acknowledgments 

K.C.T. and S.S. acknowledges support by the National Natural Science Fund of China (Grant No. G0512250610191)


\bibliography{quasiprobabilities}

\begin{thebibliography}{45}%
\makeatletter
\providecommand \@ifxundefined [1]{%
 \@ifx{#1\undefined}
}%
\providecommand \@ifnum [1]{%
 \ifnum #1\expandafter \@firstoftwo
 \else \expandafter \@secondoftwo
 \fi
}%
\providecommand \@ifx [1]{%
 \ifx #1\expandafter \@firstoftwo
 \else \expandafter \@secondoftwo
 \fi
}%
\providecommand \natexlab [1]{#1}%
\providecommand \enquote  [1]{``#1''}%
\providecommand \bibnamefont  [1]{#1}%
\providecommand \bibfnamefont [1]{#1}%
\providecommand \citenamefont [1]{#1}%
\providecommand \href@noop [0]{\@secondoftwo}%
\providecommand \href [0]{\begingroup \@sanitize@url \@href}%
\providecommand \@href[1]{\@@startlink{#1}\@@href}%
\providecommand \@@href[1]{\endgroup#1\@@endlink}%
\providecommand \@sanitize@url [0]{\catcode `\\12\catcode `\$12\catcode
  `\&12\catcode `\#12\catcode `\^12\catcode `\_12\catcode `\%12\relax}%
\providecommand \@@startlink[1]{}%
\providecommand \@@endlink[0]{}%
\providecommand \url  [0]{\begingroup\@sanitize@url \@url }%
\providecommand \@url [1]{\endgroup\@href {#1}{\urlprefix }}%
\providecommand \urlprefix  [0]{URL }%
\providecommand \Eprint [0]{\href }%
\providecommand \doibase [0]{https://doi.org/}%
\providecommand \selectlanguage [0]{\@gobble}%
\providecommand \bibinfo  [0]{\@secondoftwo}%
\providecommand \bibfield  [0]{\@secondoftwo}%
\providecommand \translation [1]{[#1]}%
\providecommand \BibitemOpen [0]{}%
\providecommand \bibitemStop [0]{}%
\providecommand \bibitemNoStop [0]{.\EOS\space}%
\providecommand \EOS [0]{\spacefactor3000\relax}%
\providecommand \BibitemShut  [1]{\csname bibitem#1\endcsname}%
\let\auto@bib@innerbib\@empty
\bibitem [{\citenamefont {Bell}(1964)}]{Bell1964}%
  \BibitemOpen
  \bibfield  {author} {\bibinfo {author} {\bibfnamefont {J.~S.}\ \bibnamefont
  {Bell}},\ }\bibfield  {title} {\bibinfo {title} {On the {E}instein {P}odolsky
  {R}osen paradox},\ }\bibfield  {journal} {\bibinfo  {journal} {Physics
  Physique Fizika}\ }\textbf {\bibinfo {volume} {1}},\ \href
  {https://doi.org/10.1103/physicsphysiquefizika.1.195}
  {10.1103/physicsphysiquefizika.1.195} (\bibinfo {year} {1964})\BibitemShut
  {NoStop}%
\bibitem [{\citenamefont {Spekkens}(2005)}]{Spekkens2005}%
  \BibitemOpen
  \bibfield  {author} {\bibinfo {author} {\bibfnamefont {R.~W.}\ \bibnamefont
  {Spekkens}},\ }\bibfield  {title} {\bibinfo {title} {Contextuality for
  preparations, transformations, and unsharp measurements},\ }\bibfield
  {journal} {\bibinfo  {journal} {Physical Review A - Atomic, Molecular, and
  Optical Physics}\ }\textbf {\bibinfo {volume} {71}},\ \href
  {https://doi.org/10.1103/PhysRevA.71.052108} {10.1103/PhysRevA.71.052108}
  (\bibinfo {year} {2005})\BibitemShut {NoStop}%
\bibitem [{\citenamefont {Hardy}(2001)}]{Hardy2001}%
  \BibitemOpen
  \bibfield  {author} {\bibinfo {author} {\bibfnamefont {L.}~\bibnamefont
  {Hardy}},\ }\href
  {https://doi.org/https://doi.org/10.48550/arXiv.quant-ph/0101012} {\bibinfo
  {title} {Quantum theory from five reasonable axioms}} (\bibinfo {year}
  {2001}),\ \Eprint {https://arxiv.org/abs/arXiv:quant-ph/0101012}
  {arXiv:quant-ph/0101012} \BibitemShut {NoStop}%
\bibitem [{\citenamefont {Home}\ and\ \citenamefont
  {Selleri}(2008)}]{Home2008}%
  \BibitemOpen
  \bibfield  {author} {\bibinfo {author} {\bibfnamefont {D.}~\bibnamefont
  {Home}}\ and\ \bibinfo {author} {\bibfnamefont {F.}~\bibnamefont {Selleri}},\
  }\bibfield  {title} {\bibinfo {title} {Bell's theorem and the {EPR}
  paradox},\ }\bibfield  {journal} {\bibinfo  {journal} {La Rivista del Nuovo
  Cimento (1978-1999)}\ }\textbf {\bibinfo {volume} {14}},\ \href
  {https://doi.org/10.1007/BF02811227} {10.1007/BF02811227} (\bibinfo {year}
  {2008})\BibitemShut {NoStop}%
\bibitem [{\citenamefont {Harrigan}\ and\ \citenamefont
  {Spekkens}(2010)}]{Harrigan2010}%
  \BibitemOpen
  \bibfield  {author} {\bibinfo {author} {\bibfnamefont {N.}~\bibnamefont
  {Harrigan}}\ and\ \bibinfo {author} {\bibfnamefont {R.~W.}\ \bibnamefont
  {Spekkens}},\ }\bibfield  {title} {\bibinfo {title} {Einstein,
  incompleteness, and the epistemic view of quantum states},\ }\bibfield
  {journal} {\bibinfo  {journal} {Foundations of Physics}\ }\textbf {\bibinfo
  {volume} {40}},\ \href {https://doi.org/10.1007/s10701-009-9347-0}
  {10.1007/s10701-009-9347-0} (\bibinfo {year} {2010})\BibitemShut {NoStop}%
\bibitem [{\citenamefont {Pusey}\ \emph {et~al.}(2012)\citenamefont {Pusey},
  \citenamefont {Barrett},\ and\ \citenamefont {Rudolph}}]{Pusey2012}%
  \BibitemOpen
  \bibfield  {author} {\bibinfo {author} {\bibfnamefont {M.~F.}\ \bibnamefont
  {Pusey}}, \bibinfo {author} {\bibfnamefont {J.}~\bibnamefont {Barrett}},\
  and\ \bibinfo {author} {\bibfnamefont {T.}~\bibnamefont {Rudolph}},\
  }\bibfield  {title} {\bibinfo {title} {On the reality of the quantum state},\
  }\href {https://doi.org/10.1038/nphys2309} {\bibfield  {journal} {\bibinfo
  {journal} {Nature Physics}\ }\textbf {\bibinfo {volume} {8}},\ \bibinfo
  {pages} {475} (\bibinfo {year} {2012})}\BibitemShut {NoStop}%
\bibitem [{\citenamefont {Popescu}(2014)}]{Popescu2014}%
  \BibitemOpen
  \bibfield  {author} {\bibinfo {author} {\bibfnamefont {S.}~\bibnamefont
  {Popescu}},\ }\bibfield  {title} {\bibinfo {title} {Nonlocality beyond
  quantum mechanics},\ }\href {https://doi.org/10.1038/nphys2916} {\bibfield
  {journal} {\bibinfo  {journal} {Nature Physics}\ }\textbf {\bibinfo {volume}
  {10}},\ \bibinfo {pages} {264} (\bibinfo {year} {2014})}\BibitemShut
  {NoStop}%
\bibitem [{\citenamefont {Bong}\ \emph {et~al.}(2020)\citenamefont {Bong},
  \citenamefont {Utreras-Alarc{\'o}n}, \citenamefont {Ghafari}, \citenamefont
  {Liang}, \citenamefont {Tischler}, \citenamefont {Cavalcanti}, \citenamefont
  {Pryde},\ and\ \citenamefont {Wiseman}}]{Bong2020}%
  \BibitemOpen
  \bibfield  {author} {\bibinfo {author} {\bibfnamefont {K.-W.}\ \bibnamefont
  {Bong}}, \bibinfo {author} {\bibfnamefont {A.}~\bibnamefont
  {Utreras-Alarc{\'o}n}}, \bibinfo {author} {\bibfnamefont {F.}~\bibnamefont
  {Ghafari}}, \bibinfo {author} {\bibfnamefont {Y.-C.}\ \bibnamefont {Liang}},
  \bibinfo {author} {\bibfnamefont {N.}~\bibnamefont {Tischler}}, \bibinfo
  {author} {\bibfnamefont {E.~G.}\ \bibnamefont {Cavalcanti}}, \bibinfo
  {author} {\bibfnamefont {G.~J.}\ \bibnamefont {Pryde}},\ and\ \bibinfo
  {author} {\bibfnamefont {H.~M.}\ \bibnamefont {Wiseman}},\ }\bibfield
  {title} {\bibinfo {title} {A strong no-go theorem on the {W}igner's friend
  paradox},\ }\href {https://doi.org/10.1038/s41567-020-0990-x} {\bibfield
  {journal} {\bibinfo  {journal} {Nature Physics}\ }\textbf {\bibinfo {volume}
  {16}},\ \bibinfo {pages} {1199} (\bibinfo {year} {2020})}\BibitemShut
  {NoStop}%
\bibitem [{\citenamefont {Patra}\ \emph {et~al.}(2023)\citenamefont {Patra},
  \citenamefont {Naik}, \citenamefont {Lobo}, \citenamefont {Sen},
  \citenamefont {Sidhardh}, \citenamefont {Alimuddin},\ and\ \citenamefont
  {Banik}}]{Patra2023}%
  \BibitemOpen
  \bibfield  {author} {\bibinfo {author} {\bibfnamefont {R.~K.}\ \bibnamefont
  {Patra}}, \bibinfo {author} {\bibfnamefont {S.~G.}\ \bibnamefont {Naik}},
  \bibinfo {author} {\bibfnamefont {E.~P.}\ \bibnamefont {Lobo}}, \bibinfo
  {author} {\bibfnamefont {S.}~\bibnamefont {Sen}}, \bibinfo {author}
  {\bibfnamefont {G.~L.}\ \bibnamefont {Sidhardh}}, \bibinfo {author}
  {\bibfnamefont {M.}~\bibnamefont {Alimuddin}},\ and\ \bibinfo {author}
  {\bibfnamefont {M.}~\bibnamefont {Banik}},\ }\bibfield  {title} {\bibinfo
  {title} {Principle of information causality rationalizes quantum
  composition},\ }\href {https://doi.org/10.1103/PhysRevLett.130.110202}
  {\bibfield  {journal} {\bibinfo  {journal} {Phys. Rev. Lett.}\ }\textbf
  {\bibinfo {volume} {130}},\ \bibinfo {pages} {110202} (\bibinfo {year}
  {2023})}\BibitemShut {NoStop}%
\bibitem [{\citenamefont {Ekert}(1991)}]{Ekart1991}%
  \BibitemOpen
  \bibfield  {author} {\bibinfo {author} {\bibfnamefont {A.~K.}\ \bibnamefont
  {Ekert}},\ }\bibfield  {title} {\bibinfo {title} {Quantum cryptography based
  on {B}ell's theorem},\ }\href {https://doi.org/10.1103/PhysRevLett.67.661}
  {\bibfield  {journal} {\bibinfo  {journal} {Phys. Rev. Lett.}\ }\textbf
  {\bibinfo {volume} {67}},\ \bibinfo {pages} {661} (\bibinfo {year}
  {1991})}\BibitemShut {NoStop}%
\bibitem [{\citenamefont {Acín}\ \emph {et~al.}(2007)\citenamefont {Acín},
  \citenamefont {Brunner}, \citenamefont {Gisin}, \citenamefont {Massar},
  \citenamefont {Pironio},\ and\ \citenamefont {Scarani}}]{Acin2007}%
  \BibitemOpen
  \bibfield  {author} {\bibinfo {author} {\bibfnamefont {A.}~\bibnamefont
  {Acín}}, \bibinfo {author} {\bibfnamefont {N.}~\bibnamefont {Brunner}},
  \bibinfo {author} {\bibfnamefont {N.}~\bibnamefont {Gisin}}, \bibinfo
  {author} {\bibfnamefont {S.}~\bibnamefont {Massar}}, \bibinfo {author}
  {\bibfnamefont {S.}~\bibnamefont {Pironio}},\ and\ \bibinfo {author}
  {\bibfnamefont {V.}~\bibnamefont {Scarani}},\ }\bibfield  {title} {\bibinfo
  {title} {Device-independent security of quantum cryptography against
  collective attacks},\ }\bibfield  {journal} {\bibinfo  {journal} {Physical
  Review Letters}\ }\textbf {\bibinfo {volume} {98}},\ \href
  {https://doi.org/10.1103/PhysRevLett.98.230501}
  {10.1103/PhysRevLett.98.230501} (\bibinfo {year} {2007})\BibitemShut
  {NoStop}%
\bibitem [{\citenamefont {Colbeck}\ and\ \citenamefont
  {Renner}(2012)}]{Colbeck2012}%
  \BibitemOpen
  \bibfield  {author} {\bibinfo {author} {\bibfnamefont {R.}~\bibnamefont
  {Colbeck}}\ and\ \bibinfo {author} {\bibfnamefont {R.}~\bibnamefont
  {Renner}},\ }\bibfield  {title} {\bibinfo {title} {Free randomness can be
  amplified},\ }\bibfield  {journal} {\bibinfo  {journal} {Nature Physics}\
  }\textbf {\bibinfo {volume} {8}},\ \href {https://doi.org/10.1038/nphys2300}
  {10.1038/nphys2300} (\bibinfo {year} {2012})\BibitemShut {NoStop}%
\bibitem [{\citenamefont {Brunner}\ \emph {et~al.}(2014)\citenamefont
  {Brunner}, \citenamefont {Cavalcanti}, \citenamefont {Pironio}, \citenamefont
  {Scarani},\ and\ \citenamefont {Wehner}}]{Brunner2014}%
  \BibitemOpen
  \bibfield  {author} {\bibinfo {author} {\bibfnamefont {N.}~\bibnamefont
  {Brunner}}, \bibinfo {author} {\bibfnamefont {D.}~\bibnamefont {Cavalcanti}},
  \bibinfo {author} {\bibfnamefont {S.}~\bibnamefont {Pironio}}, \bibinfo
  {author} {\bibfnamefont {V.}~\bibnamefont {Scarani}},\ and\ \bibinfo {author}
  {\bibfnamefont {S.}~\bibnamefont {Wehner}},\ }\bibfield  {title} {\bibinfo
  {title} {Bell nonlocality},\ }\href
  {https://doi.org/10.1103/RevModPhys.86.419} {\bibfield  {journal} {\bibinfo
  {journal} {Reviews of Modern Physics}\ }\textbf {\bibinfo {volume} {86}},\
  \bibinfo {pages} {419} (\bibinfo {year} {2014})}\BibitemShut {NoStop}%
\bibitem [{\citenamefont {Šupić}(2020)}]{Supic2020}%
  \BibitemOpen
  \bibfield  {author} {\bibinfo {author} {\bibfnamefont {I.}~\bibnamefont
  {Šupić}},\ }\bibfield  {title} {\bibinfo {title} {Nonlocality strikes
  again},\ }\bibfield  {journal} {\bibinfo  {journal} {Quantum Views}\ }\textbf
  {\bibinfo {volume} {4}},\ \href {https://doi.org/10.22331/qv-2020-06-18-38}
  {10.22331/qv-2020-06-18-38} (\bibinfo {year} {2020})\BibitemShut {NoStop}%
\bibitem [{\citenamefont {Horodecki}\ \emph {et~al.}(2009)\citenamefont
  {Horodecki}, \citenamefont {Horodecki}, \citenamefont {Horodecki},\ and\
  \citenamefont {Horodecki}}]{Horodecki2009}%
  \BibitemOpen
  \bibfield  {author} {\bibinfo {author} {\bibfnamefont {R.}~\bibnamefont
  {Horodecki}}, \bibinfo {author} {\bibfnamefont {P.}~\bibnamefont
  {Horodecki}}, \bibinfo {author} {\bibfnamefont {M.}~\bibnamefont
  {Horodecki}},\ and\ \bibinfo {author} {\bibfnamefont {K.}~\bibnamefont
  {Horodecki}},\ }\bibfield  {title} {\bibinfo {title} {Quantum entanglement},\
  }\href {https://doi.org/10.1103/RevModPhys.81.865} {\bibfield  {journal}
  {\bibinfo  {journal} {Rev. Mod. Phys.}\ }\textbf {\bibinfo {volume} {81}},\
  \bibinfo {pages} {865} (\bibinfo {year} {2009})}\BibitemShut {NoStop}%
\bibitem [{\citenamefont {Chitambar}\ and\ \citenamefont
  {Gour}(2019)}]{Chitambar2019}%
  \BibitemOpen
  \bibfield  {author} {\bibinfo {author} {\bibfnamefont {E.}~\bibnamefont
  {Chitambar}}\ and\ \bibinfo {author} {\bibfnamefont {G.}~\bibnamefont
  {Gour}},\ }\bibfield  {title} {\bibinfo {title} {Quantum resource theories},\
  }\href {https://doi.org/10.1103/RevModPhys.91.025001} {\bibfield  {journal}
  {\bibinfo  {journal} {Rev. Mod. Phys.}\ }\textbf {\bibinfo {volume} {91}},\
  \bibinfo {pages} {025001} (\bibinfo {year} {2019})}\BibitemShut {NoStop}%
\bibitem [{\citenamefont {Vidal}\ and\ \citenamefont
  {Tarrach}(1999)}]{Vidal1999}%
  \BibitemOpen
  \bibfield  {author} {\bibinfo {author} {\bibfnamefont {G.}~\bibnamefont
  {Vidal}}\ and\ \bibinfo {author} {\bibfnamefont {R.}~\bibnamefont
  {Tarrach}},\ }\bibfield  {title} {\bibinfo {title} {Robustness of
  entanglement},\ }\href {https://doi.org/10.1103/PhysRevA.59.141} {\bibfield
  {journal} {\bibinfo  {journal} {Phys. Rev. A}\ }\textbf {\bibinfo {volume}
  {59}},\ \bibinfo {pages} {141} (\bibinfo {year} {1999})}\BibitemShut
  {NoStop}%
\bibitem [{\citenamefont {Tan}\ \emph {et~al.}(2021)\citenamefont {Tan},
  \citenamefont {Narasimhachar},\ and\ \citenamefont {Regula}}]{Tan2021}%
  \BibitemOpen
  \bibfield  {author} {\bibinfo {author} {\bibfnamefont {K.~C.}\ \bibnamefont
  {Tan}}, \bibinfo {author} {\bibfnamefont {V.}~\bibnamefont {Narasimhachar}},\
  and\ \bibinfo {author} {\bibfnamefont {B.}~\bibnamefont {Regula}},\
  }\bibfield  {title} {\bibinfo {title} {Fisher information universally
  identifies quantum resources},\ }\href
  {https://doi.org/10.1103/PhysRevLett.127.200402} {\bibfield  {journal}
  {\bibinfo  {journal} {Phys. Rev. Lett.}\ }\textbf {\bibinfo {volume} {127}},\
  \bibinfo {pages} {200402} (\bibinfo {year} {2021})}\BibitemShut {NoStop}%
\bibitem [{\citenamefont {Wigner}(1932)}]{Wigner1932}%
  \BibitemOpen
  \bibfield  {author} {\bibinfo {author} {\bibfnamefont {E.}~\bibnamefont
  {Wigner}},\ }\bibfield  {title} {\bibinfo {title} {On the quantum correction
  for thermodynamic equilibrium},\ }\href
  {https://doi.org/10.1103/PhysRev.40.749} {\bibfield  {journal} {\bibinfo
  {journal} {Phys. Rev.}\ }\textbf {\bibinfo {volume} {40}},\ \bibinfo {pages}
  {749} (\bibinfo {year} {1932})}\BibitemShut {NoStop}%
\bibitem [{\citenamefont {Tan}\ \emph {et~al.}(2020)\citenamefont {Tan},
  \citenamefont {Choi},\ and\ \citenamefont {Jeong}}]{Tan2020}%
  \BibitemOpen
  \bibfield  {author} {\bibinfo {author} {\bibfnamefont {K.~C.}\ \bibnamefont
  {Tan}}, \bibinfo {author} {\bibfnamefont {S.}~\bibnamefont {Choi}},\ and\
  \bibinfo {author} {\bibfnamefont {H.}~\bibnamefont {Jeong}},\ }\bibfield
  {title} {\bibinfo {title} {Negativity of quasiprobability distributions as a
  measure of nonclassicality},\ }\href
  {https://doi.org/10.1103/PhysRevLett.124.110404} {\bibfield  {journal}
  {\bibinfo  {journal} {Phys. Rev. Lett.}\ }\textbf {\bibinfo {volume} {124}},\
  \bibinfo {pages} {110404} (\bibinfo {year} {2020})}\BibitemShut {NoStop}%
\bibitem [{\citenamefont {Spekkens}(2008)}]{Spekkens2008}%
  \BibitemOpen
  \bibfield  {author} {\bibinfo {author} {\bibfnamefont {R.~W.}\ \bibnamefont
  {Spekkens}},\ }\bibfield  {title} {\bibinfo {title} {Negativity and
  contextuality are equivalent notions of nonclassicality},\ }\href
  {https://doi.org/10.1103/PhysRevLett.101.020401} {\bibfield  {journal}
  {\bibinfo  {journal} {Phys. Rev. Lett.}\ }\textbf {\bibinfo {volume} {101}},\
  \bibinfo {pages} {020401} (\bibinfo {year} {2008})}\BibitemShut {NoStop}%
\bibitem [{\citenamefont {Ryu}\ \emph {et~al.}(2019)\citenamefont {Ryu},
  \citenamefont {Hong}, \citenamefont {Lee}, \citenamefont {Seol},
  \citenamefont {Jae}, \citenamefont {Lim}, \citenamefont {Lee}, \citenamefont
  {Lee},\ and\ \citenamefont {Lee}}]{Ryu2019}%
  \BibitemOpen
  \bibfield  {author} {\bibinfo {author} {\bibfnamefont {J.}~\bibnamefont
  {Ryu}}, \bibinfo {author} {\bibfnamefont {S.}~\bibnamefont {Hong}}, \bibinfo
  {author} {\bibfnamefont {J.-S.}\ \bibnamefont {Lee}}, \bibinfo {author}
  {\bibfnamefont {K.~H.}\ \bibnamefont {Seol}}, \bibinfo {author}
  {\bibfnamefont {J.}~\bibnamefont {Jae}}, \bibinfo {author} {\bibfnamefont
  {J.}~\bibnamefont {Lim}}, \bibinfo {author} {\bibfnamefont {J.}~\bibnamefont
  {Lee}}, \bibinfo {author} {\bibfnamefont {K.-G.}\ \bibnamefont {Lee}},\ and\
  \bibinfo {author} {\bibfnamefont {J.}~\bibnamefont {Lee}},\ }\bibfield
  {title} {\bibinfo {title} {Optical experiment to test negative probability in
  context of quantum-measurement selection},\ }\href
  {https://doi.org/10.1038/s41598-019-53121-5} {\bibfield  {journal} {\bibinfo
  {journal} {Scientific Reports}\ }\textbf {\bibinfo {volume} {9}},\ \bibinfo
  {pages} {19021} (\bibinfo {year} {2019})}\BibitemShut {NoStop}%
\bibitem [{\citenamefont {Dressel}\ \emph {et~al.}(2014)\citenamefont
  {Dressel}, \citenamefont {Malik}, \citenamefont {Miatto}, \citenamefont
  {Jordan},\ and\ \citenamefont {Boyd}}]{Dressel2014}%
  \BibitemOpen
  \bibfield  {author} {\bibinfo {author} {\bibfnamefont {J.}~\bibnamefont
  {Dressel}}, \bibinfo {author} {\bibfnamefont {M.}~\bibnamefont {Malik}},
  \bibinfo {author} {\bibfnamefont {F.~M.}\ \bibnamefont {Miatto}}, \bibinfo
  {author} {\bibfnamefont {A.~N.}\ \bibnamefont {Jordan}},\ and\ \bibinfo
  {author} {\bibfnamefont {R.~W.}\ \bibnamefont {Boyd}},\ }\bibfield  {title}
  {\bibinfo {title} {Colloquium: Understanding quantum weak values: Basics and
  applications},\ }\href {https://doi.org/10.1103/RevModPhys.86.307} {\bibfield
   {journal} {\bibinfo  {journal} {Rev. Mod. Phys.}\ }\textbf {\bibinfo
  {volume} {86}},\ \bibinfo {pages} {307} (\bibinfo {year} {2014})}\BibitemShut
  {NoStop}%
\bibitem [{\citenamefont {Goggin}\ \emph {et~al.}(2011)\citenamefont {Goggin},
  \citenamefont {Almeida}, \citenamefont {Barbieri}, \citenamefont {Lanyon},
  \citenamefont {O’Brien}, \citenamefont {White},\ and\ \citenamefont
  {Pryde}}]{Goggin2011}%
  \BibitemOpen
  \bibfield  {author} {\bibinfo {author} {\bibfnamefont {M.~E.}\ \bibnamefont
  {Goggin}}, \bibinfo {author} {\bibfnamefont {M.~P.}\ \bibnamefont {Almeida}},
  \bibinfo {author} {\bibfnamefont {M.}~\bibnamefont {Barbieri}}, \bibinfo
  {author} {\bibfnamefont {B.~P.}\ \bibnamefont {Lanyon}}, \bibinfo {author}
  {\bibfnamefont {J.~L.}\ \bibnamefont {O’Brien}}, \bibinfo {author}
  {\bibfnamefont {A.~G.}\ \bibnamefont {White}},\ and\ \bibinfo {author}
  {\bibfnamefont {G.~J.}\ \bibnamefont {Pryde}},\ }\bibfield  {title} {\bibinfo
  {title} {Violation of the {L}eggett–{G}arg inequality with weak
  measurements of photons},\ }\href {https://doi.org/10.1073/pnas.1005774108}
  {\bibfield  {journal} {\bibinfo  {journal} {Proceedings of the National
  Academy of Sciences}\ }\textbf {\bibinfo {volume} {108}},\ \bibinfo {pages}
  {1256} (\bibinfo {year} {2011})}\BibitemShut {NoStop}%
\bibitem [{\citenamefont {Dressel}\ \emph {et~al.}(2011)\citenamefont
  {Dressel}, \citenamefont {Broadbent}, \citenamefont {Howell},\ and\
  \citenamefont {Jordan}}]{Dressel2011}%
  \BibitemOpen
  \bibfield  {author} {\bibinfo {author} {\bibfnamefont {J.}~\bibnamefont
  {Dressel}}, \bibinfo {author} {\bibfnamefont {C.~J.}\ \bibnamefont
  {Broadbent}}, \bibinfo {author} {\bibfnamefont {J.~C.}\ \bibnamefont
  {Howell}},\ and\ \bibinfo {author} {\bibfnamefont {A.~N.}\ \bibnamefont
  {Jordan}},\ }\bibfield  {title} {\bibinfo {title} {Experimental violation of
  two-party leggett-garg inequalities with semiweak measurements},\ }\href
  {https://doi.org/10.1103/PhysRevLett.106.040402} {\bibfield  {journal}
  {\bibinfo  {journal} {Phys. Rev. Lett.}\ }\textbf {\bibinfo {volume} {106}},\
  \bibinfo {pages} {040402} (\bibinfo {year} {2011})}\BibitemShut {NoStop}%
\bibitem [{\citenamefont {Pan}(2020)}]{Pan2020}%
  \BibitemOpen
  \bibfield  {author} {\bibinfo {author} {\bibfnamefont {A.~K.}\ \bibnamefont
  {Pan}},\ }\bibfield  {title} {\bibinfo {title} {Interference experiment,
  anomalous weak value, and {L}eggett-{G}arg test of macrorealism},\ }\href
  {https://doi.org/10.1103/PhysRevA.102.032206} {\bibfield  {journal} {\bibinfo
   {journal} {Phys. Rev. A}\ }\textbf {\bibinfo {volume} {102}},\ \bibinfo
  {pages} {032206} (\bibinfo {year} {2020})}\BibitemShut {NoStop}%
\bibitem [{\citenamefont {Pusey}(2014)}]{Pussey2014}%
  \BibitemOpen
  \bibfield  {author} {\bibinfo {author} {\bibfnamefont {M.~F.}\ \bibnamefont
  {Pusey}},\ }\bibfield  {title} {\bibinfo {title} {Anomalous weak values are
  proofs of contextuality},\ }\href
  {https://doi.org/10.1103/PhysRevLett.113.200401} {\bibfield  {journal}
  {\bibinfo  {journal} {Phys. Rev. Lett.}\ }\textbf {\bibinfo {volume} {113}},\
  \bibinfo {pages} {200401} (\bibinfo {year} {2014})}\BibitemShut {NoStop}%
\bibitem [{\citenamefont {Kirkwood}(1933)}]{Kirkwood1933}%
  \BibitemOpen
  \bibfield  {author} {\bibinfo {author} {\bibfnamefont {J.~G.}\ \bibnamefont
  {Kirkwood}},\ }\bibfield  {title} {\bibinfo {title} {Quantum statistics of
  almost classical assemblies},\ }\href {https://doi.org/10.1103/PhysRev.44.31}
  {\bibfield  {journal} {\bibinfo  {journal} {Phys. Rev.}\ }\textbf {\bibinfo
  {volume} {44}},\ \bibinfo {pages} {31} (\bibinfo {year} {1933})}\BibitemShut
  {NoStop}%
\bibitem [{\citenamefont {Dirac}(1945)}]{Dirac1945}%
  \BibitemOpen
  \bibfield  {author} {\bibinfo {author} {\bibfnamefont {P.~A.~M.}\
  \bibnamefont {Dirac}},\ }\bibfield  {title} {\bibinfo {title} {On the analogy
  between classical and quantum mechanics},\ }\href
  {https://doi.org/10.1103/RevModPhys.17.195} {\bibfield  {journal} {\bibinfo
  {journal} {Rev. Mod. Phys.}\ }\textbf {\bibinfo {volume} {17}},\ \bibinfo
  {pages} {195} (\bibinfo {year} {1945})}\BibitemShut {NoStop}%
\bibitem [{\citenamefont {Margenau}\ and\ \citenamefont
  {Hill}(1961)}]{Margenau1961}%
  \BibitemOpen
  \bibfield  {author} {\bibinfo {author} {\bibfnamefont {H.}~\bibnamefont
  {Margenau}}\ and\ \bibinfo {author} {\bibfnamefont {R.~N.}\ \bibnamefont
  {Hill}},\ }\bibfield  {title} {\bibinfo {title} {Correlation between
  measurements in quantum theory},\ }\href {https://doi.org/10.1143/PTP.26.722}
  {\bibfield  {journal} {\bibinfo  {journal} {Progress of Theoretical Physics}\
  }\textbf {\bibinfo {volume} {26}},\ \bibinfo {pages} {722} (\bibinfo {year}
  {1961})}\BibitemShut {NoStop}%
\bibitem [{\citenamefont {Oikhberg}(1999)}]{Oikhberg1999}%
  \BibitemOpen
  \bibfield  {author} {\bibinfo {author} {\bibfnamefont {T.}~\bibnamefont
  {Oikhberg}},\ }\bibfield  {title} {\bibinfo {title} {Products of orthogonal
  projections},\ }\href {https://doi.org/10.1090/S0002-9939-99-05255-7}
  {\bibfield  {journal} {\bibinfo  {journal} {Proc. Am. Math. Soc.}\ }\textbf
  {\bibinfo {volume} {127}},\ \bibinfo {pages} {3659} (\bibinfo {year}
  {1999})}\BibitemShut {NoStop}%
\bibitem [{\citenamefont {Aharonov}\ \emph {et~al.}(1988)\citenamefont
  {Aharonov}, \citenamefont {Albert},\ and\ \citenamefont
  {Vaidman}}]{Aharonov1988}%
  \BibitemOpen
  \bibfield  {author} {\bibinfo {author} {\bibfnamefont {Y.}~\bibnamefont
  {Aharonov}}, \bibinfo {author} {\bibfnamefont {D.~Z.}\ \bibnamefont
  {Albert}},\ and\ \bibinfo {author} {\bibfnamefont {L.}~\bibnamefont
  {Vaidman}},\ }\bibfield  {title} {\bibinfo {title} {How the result of a
  measurement of a component of the spin of a spin-1/2 particle can turn out to
  be 100},\ }\href {https://doi.org/10.1103/PhysRevLett.60.1351} {\bibfield
  {journal} {\bibinfo  {journal} {Phys. Rev. Lett.}\ }\textbf {\bibinfo
  {volume} {60}},\ \bibinfo {pages} {1351} (\bibinfo {year}
  {1988})}\BibitemShut {NoStop}%
\bibitem [{\citenamefont {Jozsa}(2007)}]{Jozsa2007}%
  \BibitemOpen
  \bibfield  {author} {\bibinfo {author} {\bibfnamefont {R.}~\bibnamefont
  {Jozsa}},\ }\bibfield  {title} {\bibinfo {title} {Complex weak values in
  quantum measurement},\ }\href {https://doi.org/10.1103/PhysRevA.76.044103}
  {\bibfield  {journal} {\bibinfo  {journal} {Phys. Rev. A}\ }\textbf {\bibinfo
  {volume} {76}},\ \bibinfo {pages} {044103} (\bibinfo {year}
  {2007})}\BibitemShut {NoStop}%
\bibitem [{\citenamefont {Bartlett}\ \emph {et~al.}(2012)\citenamefont
  {Bartlett}, \citenamefont {Rudolph},\ and\ \citenamefont
  {Spekkens}}]{Bartlett2012}%
  \BibitemOpen
  \bibfield  {author} {\bibinfo {author} {\bibfnamefont {S.~D.}\ \bibnamefont
  {Bartlett}}, \bibinfo {author} {\bibfnamefont {T.}~\bibnamefont {Rudolph}},\
  and\ \bibinfo {author} {\bibfnamefont {R.~W.}\ \bibnamefont {Spekkens}},\
  }\bibfield  {title} {\bibinfo {title} {Reconstruction of gaussian quantum
  mechanics from liouville mechanics with an epistemic restriction},\ }\href
  {https://doi.org/10.1103/PhysRevA.86.012103} {\bibfield  {journal} {\bibinfo
  {journal} {Phys. Rev. A}\ }\textbf {\bibinfo {volume} {86}},\ \bibinfo
  {pages} {012103} (\bibinfo {year} {2012})}\BibitemShut {NoStop}%
\bibitem [{\citenamefont {Wagner}\ \emph {et~al.}(2021)\citenamefont {Wagner},
  \citenamefont {Kersten}, \citenamefont {Danner}, \citenamefont {Lemmel},
  \citenamefont {Pan},\ and\ \citenamefont {Sponar}}]{Wagner2021}%
  \BibitemOpen
  \bibfield  {author} {\bibinfo {author} {\bibfnamefont {R.}~\bibnamefont
  {Wagner}}, \bibinfo {author} {\bibfnamefont {W.}~\bibnamefont {Kersten}},
  \bibinfo {author} {\bibfnamefont {A.}~\bibnamefont {Danner}}, \bibinfo
  {author} {\bibfnamefont {H.}~\bibnamefont {Lemmel}}, \bibinfo {author}
  {\bibfnamefont {A.~K.}\ \bibnamefont {Pan}},\ and\ \bibinfo {author}
  {\bibfnamefont {S.}~\bibnamefont {Sponar}},\ }\bibfield  {title} {\bibinfo
  {title} {Direct experimental test of commutation relation via imaginary weak
  value},\ }\href {https://doi.org/10.1103/PhysRevResearch.3.023243} {\bibfield
   {journal} {\bibinfo  {journal} {Phys. Rev. Res.}\ }\textbf {\bibinfo
  {volume} {3}},\ \bibinfo {pages} {023243} (\bibinfo {year}
  {2021})}\BibitemShut {NoStop}%
\bibitem [{\citenamefont {Glauber}(1963)}]{Glauber1963}%
  \BibitemOpen
  \bibfield  {author} {\bibinfo {author} {\bibfnamefont {R.~J.}\ \bibnamefont
  {Glauber}},\ }\bibfield  {title} {\bibinfo {title} {Coherent and incoherent
  states of the radiation field},\ }\href
  {https://doi.org/10.1103/PhysRev.131.2766} {\bibfield  {journal} {\bibinfo
  {journal} {Phys. Rev.}\ }\textbf {\bibinfo {volume} {131}},\ \bibinfo {pages}
  {2766} (\bibinfo {year} {1963})}\BibitemShut {NoStop}%
\bibitem [{\citenamefont {Sudarshan}(1963)}]{Sudarshan1963}%
  \BibitemOpen
  \bibfield  {author} {\bibinfo {author} {\bibfnamefont {E.~C.~G.}\
  \bibnamefont {Sudarshan}},\ }\bibfield  {title} {\bibinfo {title}
  {Equivalence of semiclassical and quantum mechanical descriptions of
  statistical light beams},\ }\href
  {https://doi.org/10.1103/PhysRevLett.10.277} {\bibfield  {journal} {\bibinfo
  {journal} {Phys. Rev. Lett.}\ }\textbf {\bibinfo {volume} {10}},\ \bibinfo
  {pages} {277} (\bibinfo {year} {1963})}\BibitemShut {NoStop}%
\bibitem [{\citenamefont {Husimi}(1940)}]{Husimi1940}%
  \BibitemOpen
  \bibfield  {author} {\bibinfo {author} {\bibfnamefont {K.}~\bibnamefont
  {Husimi}},\ }\bibfield  {title} {\bibinfo {title} {Some formal properties of
  the density matrix},\ }\href {https://doi.org/10.11429/ppmsj1919.22.4_264}
  {\bibfield  {journal} {\bibinfo  {journal} {Proc. Phys. Math. Soc. Japan}\
  }\textbf {\bibinfo {volume} {22}},\ \bibinfo {pages} {264} (\bibinfo {year}
  {1940})}\BibitemShut {NoStop}%
\bibitem [{\citenamefont {Tan}\ and\ \citenamefont {Jeong}(2019)}]{Tan2019}%
  \BibitemOpen
  \bibfield  {author} {\bibinfo {author} {\bibfnamefont {K.~C.}\ \bibnamefont
  {Tan}}\ and\ \bibinfo {author} {\bibfnamefont {H.}~\bibnamefont {Jeong}},\
  }\bibfield  {title} {\bibinfo {title} {{Nonclassical light and metrological
  power: An introductory review}},\ }\href {https://doi.org/10.1116/1.5126696}
  {\bibfield  {journal} {\bibinfo  {journal} {AVS Quantum Science}\ }\textbf
  {\bibinfo {volume} {1}},\ \bibinfo {pages} {014701} (\bibinfo {year}
  {2019})}\BibitemShut {NoStop}%
\bibitem [{\citenamefont {Busch}\ \emph {et~al.}(1996)\citenamefont {Busch},
  \citenamefont {Lathi},\ and\ \citenamefont {Mittelstaedt}}]{Busch1996book}%
  \BibitemOpen
  \bibfield  {author} {\bibinfo {author} {\bibfnamefont {P.}~\bibnamefont
  {Busch}}, \bibinfo {author} {\bibfnamefont {P.~J.}\ \bibnamefont {Lathi}},\
  and\ \bibinfo {author} {\bibfnamefont {P.}~\bibnamefont {Mittelstaedt}},\
  }\bibinfo {title} {The quantum theory of measurement},\ in\ \href
  {https://doi.org/10.1007/978-3-540-37205-9_3} {\emph {\bibinfo {booktitle}
  {The Quantum Theory of Measurement}}}\ (\bibinfo  {publisher} {Springer
  Berlin Heidelberg},\ \bibinfo {address} {Berlin, Heidelberg},\ \bibinfo
  {year} {1996})\ pp.\ \bibinfo {pages} {25--90}\BibitemShut {NoStop}%
\bibitem [{\citenamefont {Jae}\ \emph {et~al.}(2023{\natexlab{a}})\citenamefont
  {Jae}, \citenamefont {Lee}, \citenamefont {Lee}, \citenamefont {Kim},\ and\
  \citenamefont {Lee}}]{jae2023a}%
  \BibitemOpen
  \bibfield  {author} {\bibinfo {author} {\bibfnamefont {J.}~\bibnamefont
  {Jae}}, \bibinfo {author} {\bibfnamefont {J.}~\bibnamefont {Lee}}, \bibinfo
  {author} {\bibfnamefont {K.-G.}\ \bibnamefont {Lee}}, \bibinfo {author}
  {\bibfnamefont {M.~S.}\ \bibnamefont {Kim}},\ and\ \bibinfo {author}
  {\bibfnamefont {J.}~\bibnamefont {Lee}},\ }\href@noop {} {\bibinfo {title}
  {Metrological power of incompatible measurements}} (\bibinfo {year}
  {2023}{\natexlab{a}}),\ \Eprint {https://arxiv.org/abs/2311.11785}
  {arXiv:2311.11785 [quant-ph]} \BibitemShut {NoStop}%
\bibitem [{\citenamefont {Jae}\ \emph {et~al.}(2023{\natexlab{b}})\citenamefont
  {Jae}, \citenamefont {Lee}, \citenamefont {Kim}, \citenamefont {Lee},\ and\
  \citenamefont {Lee}}]{jae2023b}%
  \BibitemOpen
  \bibfield  {author} {\bibinfo {author} {\bibfnamefont {J.}~\bibnamefont
  {Jae}}, \bibinfo {author} {\bibfnamefont {J.}~\bibnamefont {Lee}}, \bibinfo
  {author} {\bibfnamefont {M.~S.}\ \bibnamefont {Kim}}, \bibinfo {author}
  {\bibfnamefont {K.-G.}\ \bibnamefont {Lee}},\ and\ \bibinfo {author}
  {\bibfnamefont {J.}~\bibnamefont {Lee}},\ }\href@noop {} {\bibinfo {title}
  {Contextual quantum metrology}} (\bibinfo {year} {2023}{\natexlab{b}}),\
  \Eprint {https://arxiv.org/abs/2311.13084} {arXiv:2311.13084 [quant-ph]}
  \BibitemShut {NoStop}%
\bibitem [{\citenamefont {Catani}\ \emph {et~al.}(2022)\citenamefont {Catani},
  \citenamefont {Leifer}, \citenamefont {Scala}, \citenamefont {Schmid},\ and\
  \citenamefont {Spekkens}}]{Catani2022}%
  \BibitemOpen
  \bibfield  {author} {\bibinfo {author} {\bibfnamefont {L.}~\bibnamefont
  {Catani}}, \bibinfo {author} {\bibfnamefont {M.}~\bibnamefont {Leifer}},
  \bibinfo {author} {\bibfnamefont {G.}~\bibnamefont {Scala}}, \bibinfo
  {author} {\bibfnamefont {D.}~\bibnamefont {Schmid}},\ and\ \bibinfo {author}
  {\bibfnamefont {R.~W.}\ \bibnamefont {Spekkens}},\ }\bibfield  {title}
  {\bibinfo {title} {What is nonclassical about uncertainty relations?},\
  }\href {https://doi.org/10.1103/PhysRevLett.129.240401} {\bibfield  {journal}
  {\bibinfo  {journal} {Phys. Rev. Lett.}\ }\textbf {\bibinfo {volume} {129}},\
  \bibinfo {pages} {240401} (\bibinfo {year} {2022})}\BibitemShut {NoStop}%
\bibitem [{\citenamefont {Catani}\ \emph
  {et~al.}(2023{\natexlab{a}})\citenamefont {Catani}, \citenamefont {Leifer},
  \citenamefont {Schmid},\ and\ \citenamefont {Spekkens}}]{Catani2023a}%
  \BibitemOpen
  \bibfield  {author} {\bibinfo {author} {\bibfnamefont {L.}~\bibnamefont
  {Catani}}, \bibinfo {author} {\bibfnamefont {M.}~\bibnamefont {Leifer}},
  \bibinfo {author} {\bibfnamefont {D.}~\bibnamefont {Schmid}},\ and\ \bibinfo
  {author} {\bibfnamefont {R.~W.}\ \bibnamefont {Spekkens}},\ }\bibfield
  {title} {\bibinfo {title} {Why interference phenomena do not capture the
  essence of quantum theory},\ }\href
  {https://doi.org/10.22331/q-2023-09-25-1119} {\bibfield  {journal} {\bibinfo
  {journal} {{Quantum}}\ }\textbf {\bibinfo {volume} {7}},\ \bibinfo {pages}
  {1119} (\bibinfo {year} {2023}{\natexlab{a}})}\BibitemShut {NoStop}%
\bibitem [{\citenamefont {Catani}\ \emph
  {et~al.}(2023{\natexlab{b}})\citenamefont {Catani}, \citenamefont {Leifer},
  \citenamefont {Scala}, \citenamefont {Schmid},\ and\ \citenamefont
  {Spekkens}}]{Catani2023b}%
  \BibitemOpen
  \bibfield  {author} {\bibinfo {author} {\bibfnamefont {L.}~\bibnamefont
  {Catani}}, \bibinfo {author} {\bibfnamefont {M.}~\bibnamefont {Leifer}},
  \bibinfo {author} {\bibfnamefont {G.}~\bibnamefont {Scala}}, \bibinfo
  {author} {\bibfnamefont {D.}~\bibnamefont {Schmid}},\ and\ \bibinfo {author}
  {\bibfnamefont {R.~W.}\ \bibnamefont {Spekkens}},\ }\bibfield  {title}
  {\bibinfo {title} {Aspects of the phenomenology of interference that are
  genuinely nonclassical},\ }\href
  {https://doi.org/10.1103/PhysRevA.108.022207} {\bibfield  {journal} {\bibinfo
   {journal} {Phys. Rev. A}\ }\textbf {\bibinfo {volume} {108}},\ \bibinfo
  {pages} {022207} (\bibinfo {year} {2023}{\natexlab{b}})}\BibitemShut
  {NoStop}%
\end{thebibliography}%

\end{document}